\newcommand{\paperfont}{\fontsize{11pt}{1.2\baselineskip}\selectfont}
\begin{document}

\theoremstyle{definition}
\makeatletter
\thm@headfont{\bf}
\makeatother
\newtheorem{theorem}{Theorem}[section]
\newtheorem{definition}[theorem]{Definition}
\newtheorem{lemma}[theorem]{Lemma}
\newtheorem{proposition}[theorem]{Proposition}
\newtheorem{corollary}[theorem]{Corollary}
\newtheorem{remark}[theorem]{Remark}
\newtheorem{example}[theorem]{Example}

\lhead{}
\rhead{}
\lfoot{}
\rfoot{}

\renewcommand{\refname}{References}
\renewcommand{\figurename}{Fig.}
\renewcommand{\tablename}{Table}
\renewcommand{\proofname}{Proof}

\newcommand{\dnumiag}{\mathrm{diag}}
\newcommand{\tr}{\mathrm{tr}}
\newcommand{\dnum}{\mathrm{d}}
\newcommand{\Pnum}{\mathbb{P}}
\newcommand{\Enum}{\mathbb{E}}
\newcommand{\Rnum}{\mathbb{R}}
\newcommand{\Cnum}{\mathbb{C}}
\newcommand{\Znum}{\mathbb{Z}}
\newcommand{\Nnum}{\mathbb{N}}
\newcommand{\abs}[1]{\left\vert#1\right\vert}
\newcommand{\set}[1]{\left\{#1\right\}}
\newcommand{\norm}[1]{\left\Vert#1\right\Vert}
\newcommand{\innp}[1]{\langle {#1}\rangle}
\newcommand{\style}{\setlength{\itemsep}{1pt}\setlength{\parsep}{1pt}\setlength{\parskip}{1pt}}
\newcommand{\sech}{\mathrm{\,sech\,}}
\newcommand{\arcsec}{\mathrm{\,arcsec\,}}
\newcommand{\Ai}{\mathrm{Ai}}
\newcommand{\Bi}{\mathrm{Bi}}

\title{\textbf{Two-parameter localization and related phase transition for a Schr\"{o}dinger operator in balls and spherical shells}}
\author{Chen Jia$^{1}$,\;\;\;Zhimin Zhang$^{1,2}$,\;\;\;Lewei Zhao$^{1,*}$ \\
\footnotesize $^1$ Department of Mathematics, Wayne State University, Detroit, Michigan 48202, U.S.A.\\
\footnotesize $^2$ Beijing Computational Science Research Center, Beijing 100193, China. \\
\footnotesize $^*$ Correspondence: zhao.lewei@wayne.edu}
\date{}                              
\maketitle                           
\thispagestyle{empty}                

\paperfont

\begin{abstract}
Here we investigate the two-parameter high-frequency localization for the eigenfunctions of a Schr\"{o}dinger operator with a singular inverse square potential in high-dimensional balls and spherical shells as the azimuthal quantum number $l$ and the principal quantum number $k$ tend to infinity simultaneously, while keeping their ratio as a constant, generalizing the classical one-parameter localization for Laplacian eigenfunctions [SIAM J. Appl. Math. 73:780-803, 2013]. We prove that the eigenfunctions in balls are localized around an intermediate sphere whose radius is increasing with respect to the $l$-$k$ ratio. The eigenfunctions decay exponentially inside the localized sphere and decay polynomially outside. Furthermore, we discover a novel second-order phase transition for the eigenfunctions in spherical shells as the $l$-$k$ ratio crosses a critical value. In the supercritical case, the eigenfunctions are localized around a sphere between the inner and outer boundaries of the spherical shell. In the critical case, the eigenfunctions are localized around the inner boundary. In the subcritical case, no localization could be observed, giving rise to localization breaking. \\

\noindent 
\textbf{Keywords}: Laplace operator, eigenfunction, whispering gallery mode, focusing mode, quantum number, second-order phase transition, critical value \\

\noindent
\textbf{AMS Subject Classifications}: 35J05, 35J10,	35P99, 33C10, 35Q40
\end{abstract}

\section{Introduction}
The eigenfunctions of an elliptic operator, especially the Laplace operator, and their localization phenomena have been extensively investigated in a wide range of mathematics and physics disciplines, including but not limited to spectral theory, probability theory, dynamical systems, acoustics, optics, wave mechanics, quantum mechanics, and condensed matter physics \cite{grebenkov2013geometrical}. An historical documentation of localization behavior dates back to the early 19th century, when Lord Rayleigh reported an acoustical phenomenon that occurred in the whispering gallery of St Paul's Cathedral in London \cite{rayleigh1910problem}: the whisper of one person propagated along the curved wall so that others can hear it from the opposite side of the gallery. Other historical examples of whispering galleries include the Gol Gumbaz mausoleum in Bijapur and the Echo Wall of the Temple of Heaven in Beijing. This acoustical effect and many related wave phenomena can be mathematically explained by the so-called whispering gallery modes, a certain type of Laplacian eigenmodes in a bounded domain, that are mostly distributed around the boundary of the domain and almost zero inside. The existence of whispering gallery modes in the limit of large eigenvalues has been constructed for an arbitrary two-dimensional domain with a smooth convex boundary by Keller and Rubinow \cite{keller1960asymptotic}.

In addition, Chen and coworkers have reported another type of localization phenomenon resulting from the so-called focusing modes \cite{chen1994visualization}. Under a different limit of large eigenvalues, they found that the Laplacian eigenmodes are localized in a small subdomain and are almost zero outside. Both the whispering gallery modes and focusing modes occur as the associated eigenvalue increases and thus are called high-frequency or high-energy eigenmodes. These and other localized eigenmodes have been intensively investigated for various domains, known as quantum billiards \cite{stockmann2000quantum, gutzwiller2013chaos}. Recently, a rigorous mathematical theory of high-frequency localization for Laplacian eigenfunctions in circular, spherical, and elliptical domains has been established by Nguyen and Grebenkov \cite{nguyen2013localization}. Readers may refer to \cite{jakobson2001geometric, grebenkov2013geometrical} for comprehensive reviews about the geometric properties of Laplacian eigenfunctions.

In quantum physics and condensed matter physics, Schr\"{o}dinger operators and their eigenfunctions play a fundamental role. One of the most important localization phenomena in condensed matter physics is Anderson localization \cite{anderson1958absence}, which describes the absence of diffusion of waves in disordered media and explains the metal-insulator transitions in semiconductors. Although it has been widely studied over the past fifty years \cite{lagendijk2009fifty, abrahams2010years}, there have been some recent developments regarding the mathematical analysis and computation on this subject \cite{filoche2012universal, arnold2016effective, arnold2019localization, arnold2019computing}. In addition, Schr\"{o}dinger operators with a singular inverse square potential have attracted increasing attention owing to its significant role in both mathematics and in physics. Mathematically, the inverse square potential has the same differential order as the Laplace operator \cite{li2017efficient}, while it usually invokes strong singularities of the Schr\"{o}dinger eigenfunctions and thus cannot be treated as a lower-order perturbation term \cite{kalf1975spectral, cao2006solutions, felli2006elliptic, felli2007schrodinger}. Physically, the inverse square potential serves as an intermediate threshold between regular potentials and singular potentials in nonrelativistic quantum mechanics \cite{case1950singular}. Furthermore, it also arises in many other scientific fields such as nuclear physics, molecular physics, and quantum cosmology \cite{frank1971singular}.

In this paper, we consider the eigenfunctions of the Schr\"{o}dinger operator
\begin{equation*}
Lu = -\triangle u+\frac{c^2}{|x|^2}u
\end{equation*}
with a singular inverse square potential, where $c\geq 0$ is the strength of the potential. Here we focus on the case of $d\geq 2$. Specifically, we consider the following eigenvalue problem with the Dirichlet boundary condition:
\begin{equation}\label{eigenproblem}\left\{
\begin{split}
& Lu = \lambda u\quad \textrm{in}\;\;\Omega, \\
& u = 0  \quad \textrm{on}\;\;\partial\Omega,
\end{split}\right.
\end{equation}
where $\Omega$ is a bounded region in $\Rnum^d$ with a smooth boundary. In fact, other boundary conditions such as the Neumann and Robin boundary conditions can be analyzed in a similar way.

Define the Sobolev spaces
\begin{equation*}
W^1(\Omega) = H^1(\Omega)\cap L^2_{r^{-2}}(\Omega),\;\;\;
W^1_0(\Omega) = H^1_0(\Omega)\cap L^2_{r^{-2}}(\Omega)
\end{equation*}
equipped with the norm
\begin{equation*}
\|u\|_{W^1(\Omega)} = (\|\nabla u\|^2+\|u\|^2_{r^{-2}})^{1/2}.
\end{equation*}
Then the variational form of \eqref{eigenproblem} is to find $\lambda\in\Rnum$ and $u\in W^1_0(\Omega)\setminus\{0\}$ such that
\begin{equation*}
(\nabla u,\nabla v)_{\Omega}+c^2(u,v)_{r^{-2},\Omega} = \lambda(u,v)_\Omega,\;\;\;v\in W^1_0(\Omega).
\end{equation*}
By the Sturm-Liouville theory, all eigenvalues of the eigenvalue problem \eqref{eigenproblem} can be listed as
\begin{equation*}
0 < \lambda_1 < \lambda_2 \leq\cdots \leq \lambda_n \rightarrow \infty.
\end{equation*}
Moreover, it is known that that $\lambda_n = O(n^{2/d})$ for any fixed $c\geq 0$ \cite{li2017efficient}. When $c = 0$, the Schr\"{o}dinger operator $L$ reduces to the classical Laplacian, whose localization has been discussed extensively \cite{nguyen2013localization}.

The aim of the present work is to investigate the high-frequency localization for the eigenfunctions of the Schr\"{o}dinger operator $L$ with a singular inverse square potential in high-dimensional balls and spherical shells. It turns out that the operator $L$ has an azimuthal quantum number $l$ and a principal quantum number $k$. Compared with previous studies on whispering gallery modes in the limit of $l\rightarrow\infty$ and focusing modes in the limit of $k\rightarrow\infty$ \cite{nguyen2013localization}, we consider here a novel two-parameter localization for Schr\"{o}dinger eigenfunctions as $l,k\rightarrow\infty$ simultaneously while keeping their ratio $l/k\rightarrow w$ as a constant. Some new localization phenomena are discovered and the rigorous $L^p$-localization theory is established. We prove that the eigenfunctions in balls are localized around an intermediate sphere whose radius is increasing with respect to the $l$-$k$ ratio $w$. The classical whispering gallery modes and focusing modes for the Laplaian can be viewed as limiting cases of our two-parameter localization as $w\rightarrow\infty$ and $w\rightarrow 0$, respectively. The eigenfunctions are shown to decay exponentially inside the localized sphere and decay polynomially outside. Furthermore, we observe an interesting second-order phase transition for the eigenfunctions in spherical shells when the $l$-$k$ ratio $w$ crosses a critical value $s(R)$ which is decreasing with respect to the ratio $R$ of the outer and inner radii of the spherical shell. In the supercritical case of $w>s(R)$, the eigenfunctions are localized around an intermediate sphere whose radius is increasing with respect to $w$ with $w = \infty$ corresponding to whispering gallery modes. In the critical case of $w = s(R)$, the eigenfunctions are localized around the inner boundary of the spherical shell, giving rise to the so-called critical modes. In the subcritical case of $w<s(R)$, the eigenfunctions fail to be localized, leading to new phenomena of localization breaking and focusing mode breaking.

\section{Localization for Schr\"{o}dinger eigenfunctions in balls}
In this section, we consider the case when the domain
\begin{equation*}
\Omega_{\textrm{ball}} = \{x\in\Rnum^d:|x|<1\}
\end{equation*}
is the unit ball centered at the origin. In spherical coordinates, the Laplace operator can be written as
\begin{equation*}
\Delta=\frac{\partial^2}{\partial r^2}+\frac{d-1}{r}\frac{\partial}{\partial r}+\frac{1}{r^2}\Delta_0
\end{equation*}
where $r=|x|$ is the radial coordinate and $\Delta_0$ is the Laplace-Beltrami operator on the $(d-1)$-dimensional sphere $S^{d-1}$. It is well known that the eigenfunctions of the spherical Laplacian $\Delta_0$ are the spherical harmonics of degree $l = 0,1,2,\cdots$ \cite{dai2013approximation, dunkl2014orthogonal}. For each $l\geq 0$, we have
\begin{equation*}
\Delta_0f = -l(l+d-2)f, \quad f\in H_l,
\end{equation*}
where $H_l$ is the vector space of spherical harmonics of degree $l$ whose dimension is given by
\begin{equation*}
\dim H_l = \binom{l+d-1}{d-1}-\binom{l+d-3}{d-1}.
\end{equation*}
In spherical coordinates, the eigenvalue problem \eqref{eigenproblem} can be rewritten as
\begin{equation*}
\frac{\partial^2u}{\partial r^2}+\frac{d-1}{r}\frac{\partial u}{\partial r}
+\left(\lambda+\frac{\Delta_0-c^2}{r^2}\right)u = 0.
\end{equation*}
We now represent the eigenfunction $u$ in the variable separation form as
\begin{equation*}
u(x) = v(r)Y_{lm}(\xi)
\end{equation*}
where $r$ is the angular coordinate, $\xi = (\xi_1,\cdots,\xi_{d-1})$ are angular coordinates, and $\{Y_{lm}: 1\leq m\leq\dim H_l\}$ is an orthonormal basis of $H_l$. It is easy to check that the radial part $v(r)$ satisfies the second-order ordinary differential equation
\begin{equation}\label{eq: InfODE}
\frac{\partial^2v}{\partial r^2}+\frac{d-1}{r}\frac{\partial v}{\partial r}
+\left(\lambda-\frac{l(l+d-2)+c^2}{r^2}\right)v = 0.
\end{equation}
For convenience, set
\begin{equation*}
\nu_l = \sqrt{\left(l+\frac{d}{2}-1\right)^2+c^2}.
\end{equation*}
Then \eqref{eq: InfODE} can be rewritten as
\begin{equation*}
r^2\frac{\partial^2}{\partial r^2}[r^{\frac{d}{2}-1}v]
+r\frac{\partial}{\partial r}[r^{\frac{d}{2}-1}v]+(\lambda r^2-\nu_l^2)r^{\frac{d}{2}-1}v = 0.
\end{equation*}
We then define a new variable $t = \sqrt{\lambda}r$ and set $\hat{v}(t) = r^{\frac{d}{2}-1}v(r)$. Then the new function $\hat{v}$ turns out to be the solution of the Bessel equation
\begin{equation*}
t^2\frac{\partial^2\hat{v}}{\partial t^2}
+t\frac{\partial\hat{v}}{\partial t}+(t^2-\nu_l^2)\hat{v} = 0,
\end{equation*}
whose general solutions can be expressed as the linear combination of Bessel functions of the first and second kinds. Since we now focus on the eigenvalue problem in the unit ball, the eigenfunction should not explode at the origin and thus
\begin{equation*}
\hat{v}(t) = J_{\nu_l}(t),
\end{equation*}
where $J_{\nu_l}$ is the Bessel function of the first kind of order $\nu_l$. Thus, the radial part
\begin{equation*}
v(r) = r^{1-\frac{d}{2}}J_{\nu_l}(\sqrt{\lambda}r),
\end{equation*}
is an ultraspherical Bessel function. With this expression, the Dirichlet boundary condition is converted into
\begin{equation*}
J_{\nu_l}(\sqrt{\lambda}) = 0.
\end{equation*}
For each $l\geq 0$, the eigenvalue problem \eqref{eigenproblem} has infinitely many positive eigenvalues
\begin{equation*}
\lambda_{lk} = j_{\nu_l,k}^2,\;\;\;k = 1,2,\cdots,
\end{equation*}
where $j_{\nu_l,k}$ is the $k$th zero of the Bessel function $J_{\nu_l}$. Finally, all basis eigenfunctions of the eigenvalue problem \eqref{eigenproblem} can be represented as
\begin{equation*}
u_{klm}(r,\xi) = r^{1-\frac{d}{2}}J_{\nu_l}(j_{\nu_l,k}r)Y_{lm}(\xi).
\end{equation*}
In quantum mechanics, $k$ is called the principal quantum number, $l$ is called the azimuthal quantum number, and $m$ is called the magnetic quantum number. In the two-dimensional case, the corresponding eigenfunctions in the unit disk are given by
\begin{equation*}
u_{kl1}(r,\theta) = J_{\nu_l}(j_{\nu_l,k}r)\cos l\theta,\;\;\;
u_{kl2}(r,\theta) = J_{\nu_l}(j_{\nu_l,k}r)\sin l\theta.
\end{equation*}

There is an apparent difference between the eigenfunctions for Laplacian and Schr\"{o}dinger operators in the two dimensional case. When $d = 2$, the order $\nu_l$ of the Bessel function reduces to
\begin{equation*}
\nu_l = \sqrt{l^2+c^2}.
\end{equation*}
In particular, we have $\nu_0 = c$. Since $J_0(0) = 1$ and $J_c(0) = 0$ for any $c>0$, when the azimuthal quantum number $l = 0$, the Schr\"{o}dinger eigenfunctions will collapse at the origin but the Laplacian eigenfunctions will not, as depicted in Fig. \ref{comparison}(a),(b). With the increase of $l$, the eigenfunctions for the two operators become increasingly similar to each other.

\begin{figure}
\centerline{\includegraphics[width=1\textwidth]{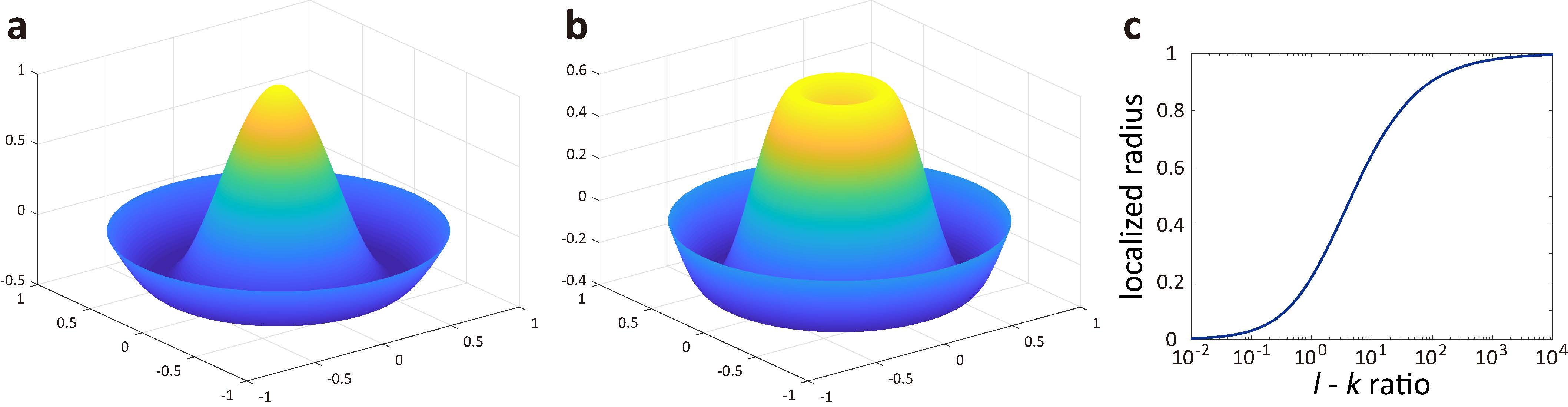}}
\caption{\textbf{Eigenfunctions and localized radii of the Laplacian and Schr\"{o}dinger operators.} (a) An eigenfunction of the two-dimensional Laplace operator when $l = 0$ and $k = 2$, which does not collapse at the origin. (b) An eigenfunction of the two-dimensional Schr\"{o}dinger operator when $c = 1$, $l = 0$, and $k = 2$, which collapses at the origin. (c) The localized radius of Schr\"{o}dinger eigenfunctions versus the $l$-$k$ ratio $w$ as $l,k\rightarrow\infty$ simultaneously.}\label{comparison}
\end{figure}

In previous studies, the high-frequency localization for Laplacian eigenfunctions has been studied extensively as $l\rightarrow\infty$ and $k$ is fixed or as $k\rightarrow\infty$ and $l$ is fixed \cite{nguyen2013localization}. Here we consider a more general two-parameter high-frequency localization for Schr\"{o}dinger eigenfunctions as $l,k\rightarrow\infty$ simultaneously while keeping $l/k\rightarrow w$.

\begin{lemma}
As $l,k\rightarrow\infty$ while keeping $l/k\rightarrow w>0$, we have
\begin{equation*}
\frac{j_{\nu_l,k}}{\nu_l} \rightarrow h(w),
\end{equation*}
where $h(w)>1$ is the unique solution of the algebraic equation
\begin{equation}\label{inverse}
\sqrt{h(w)^2-1}-\arcsec(h(w)) = \frac{\pi}{w}.
\end{equation}
\end{lemma}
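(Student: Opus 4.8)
The plan is to reduce the statement to the classical large-order asymptotics of the zeros of Bessel functions in the oscillatory (Debye) regime, in which the order and the rank of the zero tend to infinity in a fixed ratio. First I would record the elementary reduction of the order: since $\nu_l=\sqrt{(l+d/2-1)^2+c^2}=(l+d/2-1)+O(1/l)$, the hypothesis $l/k\to w>0$ gives $\nu_l\to\infty$ together with $k/\nu_l\to 1/w$. Hence it suffices to understand the behaviour of $j_{\nu,k}/\nu$ as $\nu\to\infty$ while $k/\nu$ approaches the fixed positive limit $1/w$.

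Next I would invoke the Debye expansion of $J_\nu(\nu\sec\beta)$ for $\beta$ in compact subsets of $(0,\pi/2)$: up to a non-vanishing amplitude factor, $J_\nu(\nu\sec\beta)$ behaves like $\cos\bigl(\nu(\tan\beta-\beta)-\pi/4\bigr)$ with an explicit correction of relative order $1/\nu$, uniformly on such compacta. Writing the $k$-th positive zero as $j_{\nu,k}=\nu\sec\beta_{\nu,k}$, the standard consequence (see the treatments of zeros of Bessel functions of large order in Watson, Olver, or the DLMF) is the phase relation
\begin{equation*}
\nu\bigl(\tan\beta_{\nu,k}-\beta_{\nu,k}\bigr)=\Bigl(k-\tfrac14\Bigr)\pi+O\bigl(1/\nu\bigr),
\end{equation*}
valid as long as $\beta_{\nu,k}$ stays in a fixed compact subinterval of $(0,\pi/2)$. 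I would therefore first need a short a priori argument, using the strict monotonicity of $\phi(\beta):=\tan\beta-\beta$ (which increases from $0$ to $\infty$ on $(0,\pi/2)$) together with standard monotonicity and interlacing properties of the zeros $j_{\nu,k}$, to confirm that in the regime $k/\nu\to 1/w$ the quantity $\beta_{\nu,k}$ is indeed bounded away from $0$ and from $\pi/2$.

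Granting this, dividing the phase relation by $\nu$ and passing to the limit with $k/\nu\to 1/w$ gives $\tan\beta_{\nu,k}-\beta_{\nu,k}\to\pi/w$. Since $\phi$ is a strictly increasing bijection of $(0,\pi/2)$ onto $(0,\infty)$, there is a unique $\beta_\ast\in(0,\pi/2)$ with $\tan\beta_\ast-\beta_\ast=\pi/w$, and continuity of $\phi^{-1}$ yields $\beta_{\nu,k}\to\beta_\ast$, whence $j_{\nu_l,k}/\nu_l=\sec\beta_{\nu_l,k}\to\sec\beta_\ast=:h(w)>1$. It remains to rewrite the defining equation for $\beta_\ast$ in the stated form: with $h(w)=\sec\beta_\ast$ one has $\sqrt{h(w)^2-1}=\tan\beta_\ast$ and $\arcsec(h(w))=\beta_\ast$, so $\sqrt{h(w)^2-1}-\arcsec(h(w))=\tan\beta_\ast-\beta_\ast=\pi/w$; and $h(w)$ is the unique solution of \eqref{inverse} because $z\mapsto\sqrt{z^2-1}-\arcsec z$ has derivative $\sqrt{z^2-1}/z>0$ on $(1,\infty)$ and maps $(1,\infty)$ onto $(0,\infty)$.

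The main obstacle is the middle step: one must control the zeros of $J_{\nu_l}$ by those of the leading Debye approximation with an error that is uniform over the moving parameter range, and in particular rule out the degenerate limits $\beta_{\nu,k}\to 0$ (the transition region $j_{\nu,k}\approx\nu$, governed by an Airy-type rather than a cosine-type expansion) and $\beta_{\nu,k}\to\pi/2$. I expect this to follow by combining Olver's error bounds for the Debye expansion with the crude facts $j_{\nu,1}>\nu$ and $j_{\nu,k+1}-j_{\nu,k}\to\pi$ along rays $k/\nu=\mathrm{const}$, but organizing the uniformity is the delicate point; everything after the phase relation is elementary calculus. An alternative route would start from the uniform Airy-type asymptotics for $j_{\nu,k}$ in terms of the zeros $a_k$ of the Airy function and insert $a_k\sim-(3\pi k/2)^{2/3}$, which leads to the same equation via the integral relation defining the Debye variable.
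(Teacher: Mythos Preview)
Your approach is sound in principle, and you correctly isolate its one genuine difficulty: justifying the phase relation $\nu(\tan\beta_{\nu,k}-\beta_{\nu,k})=(k-\tfrac14)\pi+O(1/\nu)$ uniformly, together with the a~priori compactness of $\beta_{\nu,k}$ in $(0,\pi/2)$ and the correct zero-indexing between $J_\nu$ and its Debye approximation. The paper sidesteps exactly these issues by taking the route you mention at the very end as an ``alternative'': it quotes directly the uniform Airy-type asymptotic $j_{\nu,k}=\nu\,z(\zeta)+O(\nu^{-1})$ with $\zeta=\nu^{-2/3}a_k$ (DLMF 10.21.41), where $z(\zeta)$ is defined implicitly by $\tfrac{2}{3}(-\zeta)^{3/2}=\sqrt{z^2-1}-\arcsec z$, and then inserts the Airy-zero asymptotic $a_k=-(3\pi k/2)^{2/3}+O(k^{-1/3})$. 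This gives $\tfrac{2}{3}(-\zeta)^{3/2}\to\pi/w$ at once, hence $j_{\nu_l,k}/\nu_l\to h(w)$, with no separate compactness or counting argument because the cited formula already encapsulates the uniformity across the transition and oscillatory regimes. Your Debye-based derivation would arrive at the same limit but requires you to carry out precisely the delicate middle step you flag; the paper's choice trades that effort for a single citation. The final calculus step---uniqueness of $h(w)$ via monotonicity of $z\mapsto\sqrt{z^2-1}-\arcsec z$---is identical in both.
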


\begin{proof}
Let $f$ be a function on $[1,\infty)$ defined by $f(z) = \sqrt{z^2-1}-\arcsec z$. It is easy to prove that $f$ is a strictly increasing function with
\begin{equation*}
f(1) = 0,\;\;\;\lim_{z\rightarrow\infty}f(z) = \infty.
\end{equation*}
Thus, the algebraic equation \eqref{inverse} has a unique solution $h(w)>1$. Let $a_k$ is the $k$th zero of the Airy function $\mathrm{Ai}(z)$. As $l,k\rightarrow\infty$ and $l/k\rightarrow w>0$, the zero $j_{\nu_l,k}$ of the Bessel function $J_{\nu_l}(z)$ has the following asymptotic behavior \cite[Equation 10.21.41]{special}:
\begin{equation*}
j_{\nu_l,k} = \nu_lz(\zeta)+O(\nu_l^{-1}),
\end{equation*}
where $\zeta = \nu_l^{-2/3}a_k$ and $z(\zeta)$ is the unique solution of
\begin{equation*}
\frac{2}{3}(-\zeta)^{3/2} = \sqrt{z(\zeta)^2-1}-\arcsec(z(\zeta)).
\end{equation*}
Moreover, as $k\rightarrow\infty$, the $k$th zero $a_k$ of the Airy function $\mathrm{Ai}(z)$ can be represented as \cite[Equation 9.9.6]{special}
\begin{equation*}
a_k = -T\left(\frac{3\pi}{8}(4k-1)\right),
\end{equation*}
where the function $T$ has the following asymptotic behavior \cite[Equation 9.9.18]{special}:
\begin{equation*}
T(t) = t^{2/3}+O(t^{-4/3}).
\end{equation*}
Combining the above two equations shows that
\begin{equation*}
a_k = -\left(\frac{3\pi k}{2}\right)^{2/3}+O(k^{-1/3}).
\end{equation*}
Therefore, we have
\begin{equation*}
-\zeta = -\nu_l^{-2/3}a_k \sim \left(\frac{3\pi k}{2l}\right)^{2/3}
\sim \left(\frac{3\pi}{2w}\right)^{2/3}.
\end{equation*}
This shows that
\begin{equation*}
\sqrt{z(\zeta)^2-1}-\arcsec(z(\zeta)) = \frac{2}{3}(-\zeta)^{3/2} \sim \frac{\pi}{w}.
\end{equation*}
Thus, we finally obtain that
\begin{equation*}
\frac{j_{\nu_l,k}}{\nu_l} = z(\zeta)+O(\nu_l^{-2}) \sim h(w),
\end{equation*}
which completes the proof.
\end{proof}

The two-parameter $L^\infty$-localization for Schr\"{o}dinger eigenfunctions in balls is stated as follows.

\begin{theorem}\label{linfty}
For any $r>0$ and $\epsilon>0$, let $D(r,\epsilon) = \{x\in\Rnum^d: ||x|-r|\geq\epsilon\}$. Then for any $\epsilon>0$, as $l,k\rightarrow\infty$ while keeping $l/k\rightarrow w>0$, we have
\begin{equation*}
\frac{\|u_{klm}\|_{L^\infty(D(1/h(w),\epsilon))}}{\|u_{klm}\|_{L^\infty(\Omega_{\textrm{ball}})}} \rightarrow 0,
\end{equation*}
where $h(w)>1$ is defined in \eqref{inverse}.
\end{theorem}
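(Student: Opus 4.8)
The plan is to strip off the angular factor, reduce the statement to the radial function, and then control that function by Olver's uniform Airy-type expansion for Bessel functions of large order. Since $u_{klm}(r,\xi)=g_{l,k}(r)\,Y_{lm}(\xi)$ with $g_{l,k}(r)=r^{1-d/2}J_{\nu_l}(j_{\nu_l,k}r)$, and the factor $Y_{lm}$ appears in both the numerator and the denominator of the ratio, it suffices to show that
\[
\frac{\sup\{\,|g_{l,k}(r)| : 0<r<1,\ |r-1/h(w)|\ge\epsilon\,\}}{\sup\{\,|g_{l,k}(r)| : 0<r<1\,\}}\longrightarrow 0 .
\]
Because $D(1/h(w),\epsilon)$ shrinks as $\epsilon$ grows, the numerator is nonincreasing in $\epsilon$, so it is enough to prove this for all sufficiently small $\epsilon$; fix a constant $\delta\in\big(0,2/(e\,h(w))\big)$ (so also $\delta<1/h(w)$) and assume $\epsilon<1/h(w)-\delta$. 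Write $\nu=\nu_l\to\infty$ and $\beta_{l,k}=j_{\nu_l,k}/\nu_l\to h(w)>1$ by the Lemma; the rescaling $z=\beta_{l,k}r$ sends $(0,1)$ onto $(0,\beta_{l,k})$, and the Bessel turning point $z=1$ corresponds to $r_\ast:=\nu_l/j_{\nu_l,k}\in(0,1)$ with $r_\ast\to1/h(w)$.

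First I would bound the denominator from below by evaluating at $r_\ast$: since $j_{\nu_l,k}r_\ast=\nu_l$ we have $g_{l,k}(r_\ast)=r_\ast^{1-d/2}J_{\nu_l}(\nu_l)$, and using $r_\ast^{1-d/2}\to h(w)^{d/2-1}>0$ together with $J_\nu(\nu)\sim c_0\nu^{-1/3}$, $c_0=2^{1/3}/(3^{2/3}\Gamma(2/3))>0$ (the $\zeta=0$ case of Olver's expansion, where the Airy factor $\mathrm{Ai}(0)$ is nonzero), we obtain $\sup_{0<r<1}|g_{l,k}(r)|\ge|g_{l,k}(r_\ast)|\ge c_1\nu^{-1/3}$ for some $c_1>0$ and all large $l,k$. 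It then remains to show that the numerator is $o(\nu^{-1/3})$.

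For the numerator I would invoke the uniform expansion \cite[\S10.20]{special}, valid for $z\in(0,\infty)$,
\[
J_\nu(\nu z)=\Big(\frac{4\zeta}{1-z^2}\Big)^{1/4}\frac{\mathrm{Ai}(\nu^{2/3}\zeta)}{\nu^{1/3}}\,\big(1+O(\nu^{-1})\big),
\]
where $\zeta=\zeta(z)$ is the smooth strictly decreasing function with $\zeta>0$ on $(0,1)$, $\zeta(1)=0$, $\zeta<0$ on $(1,\infty)$, determined by $\tfrac23(-\zeta)^{3/2}=\sqrt{z^2-1}-\arcsec z$ for $z\ge1$ and the companion logarithmic relation for $z\le1$, and where $(4\zeta/(1-z^2))^{1/4}$ extends to a positive continuous function of $z\in(0,\infty)$. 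Splitting the set in the numerator into $(0,\delta]$, $[\delta,1/h(w)-\epsilon]$, and $[1/h(w)+\epsilon,1)$, I would argue as follows. On $[1/h(w)+\epsilon,1)$ the variable $z=\beta_{l,k}r$ stays in a compact subset of $(1,\infty)$ bounded away from $1$, so $\zeta\le-c<0$; the oscillatory bound $|\mathrm{Ai}(-x)|\le Cx^{-1/4}$, the boundedness of the prefactor, and the boundedness of $r^{1-d/2}$ give $|g_{l,k}(r)|\le C\nu^{-1/2}=o(\nu^{-1/3})$. On $[\delta,1/h(w)-\epsilon]$ the variable $z$ stays in a compact subset of $(0,1)$ bounded away from $1$, so $\zeta\ge c>0$, and $|\mathrm{Ai}(x)|\le Cx^{-1/4}e^{-\frac23x^{3/2}}$ forces $|g_{l,k}(r)|\le Ce^{-c'\nu}$. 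On the cap $(0,\delta]$, where the weight $r^{1-d/2}$ (for $d\ge3$) and the prefactor may grow at $0$, I would use that $\tfrac23\zeta^{3/2}=\ln(2/z)-1+O(z^2)$ as $z\to0^+$, so that $\mathrm{Ai}(\nu^{2/3}\zeta)$ decays like $(e\beta_{l,k}r/2)^{\nu}$ up to algebraic factors; hence $|g_{l,k}(r)|\le C\nu^{C}\,r^{\,\nu+1-d/2}(e\beta_{l,k}/2)^{\nu}\le C\nu^{C}(e\,h(w)\delta/2+o(1))^{\nu}$, which is exponentially small by the choice of $\delta$. Combining the three bounds with the denominator estimate completes the proof.

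The step I expect to be the main obstacle is the uniform bookkeeping, especially at the two ends of the interval: one must verify that Olver's relative error $O(\nu^{-1})$, the prefactor $(4\zeta/(1-z^2))^{1/4}$, and the Airy asymptotics are all controlled uniformly over the relevant ranges of $z$, and on the cap $(0,\delta]$ one must in addition play the genuinely unbounded weight $r^{1-d/2}$ off against the super-exponential decay of the Airy factor---which is precisely why a small fixed $\delta$ is chosen at the outset. The outer region $r>1/h(w)+\epsilon$ is the one where only polynomial decay is available, so the ratio there is merely $O(\nu^{-1/6})$ rather than exponentially small; this is a quantitative form of the ``decay polynomially outside'' behavior announced in the introduction.
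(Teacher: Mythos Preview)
Your proposal is correct and reaches the same quantitative conclusion as the paper (exponential decay inside, $O(\nu_l^{-1/6})$ outside), but the route differs in one key place. For the inner region $r\le 1/h(w)-\epsilon$ the paper does not use Olver's uniform Airy expansion at all: it invokes the Paris inequality $J_\nu(\nu z)\le z^\nu e^{\nu(1-z)}J_\nu(\nu)$ for $0<z\le1$ \cite[Equation 10.14.7]{special}, which in a single stroke yields
\[
z^{1-d/2}\frac{J_{\nu_l}(\nu_l z)}{J_{\nu_l}(\nu_l)}\le e^{(1-z)(d/2-1)}\bigl[z e^{1-z}\bigr]^{\nu_l+1-d/2}\lesssim q^{\nu_l+1-d/2},\qquad q=(1-\epsilon)e^\epsilon<1,
\]
valid uniformly on the whole interval $(0,1-\epsilon]$. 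This absorbs the unbounded weight $r^{1-d/2}$ into the exponent automatically, so no separate ``cap'' $(0,\delta]$ is needed and the bookkeeping you flag as the main obstacle disappears. For the outer region the paper uses Debye's expansion \cite[Equation 10.19.6]{special}, which is exactly the $\zeta<0$ regime of the Olver expansion you use, and arrives at the same $\nu_l^{-1/6}$ bound.

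What each approach buys: the paper's Paris-inequality argument is shorter and more elementary, and avoids any uniformity issues of the Olver error term near $z=0^+$ (where one must also track the $\mathrm{Ai}'$ contribution, not just $\mathrm{Ai}$, in the expansion). Your approach is more unified---one asymptotic formula everywhere---and makes the mechanism transparent: exponential decay inside comes from $\mathrm{Ai}(x)\sim Cx^{-1/4}e^{-\tfrac{2}{3}x^{3/2}}$ for $x>0$, while polynomial decay outside comes from the $|x|^{-1/4}$ envelope of $\mathrm{Ai}$ for $x<0$. Your cap estimate is fine in spirit, but if you want to make it fully rigorous without chasing the Olver remainder as $z\to0^+$, the cleanest patch is precisely the Paris bound on $(0,\delta]$.
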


\begin{proof}
For any $r>0$ and $\epsilon>0$, let
\begin{equation}\label{shells}
D_1(r,\epsilon) = \{x\in\Rnum^d: |x|\leq r-\epsilon\},\;\;\;
D_2(r,\epsilon) = \{x\in\Rnum^d: r+\epsilon\leq|x|\leq 1\}.
\end{equation}
Clearly, we have
\begin{equation*}
\frac{\|u_{klm}\|_{L^\infty(D_1(1/h(w),\epsilon))}}{\|u_{klm}\|_{L^\infty(\Omega_{\textrm{ball}})}}
= \frac{\|(j_{\nu_l,k}r)^{1-d/2}J_{\nu_l}(j_{\nu_l,k}r)\|_{L^\infty([0,1/h(w)-\epsilon])}}
{\|(j_{\nu_l,k}r)^{1-d/2}J_{\nu_l}(j_{\nu_l,k}r)\|_{L^\infty([0,1])}}.
\end{equation*}
Since $j_{\nu,k}>\nu$ for any $\nu\geq 0$ and $k\geq 1$ \cite[Equation 10.21.3]{special}, we have
\begin{equation*}
\|(j_{\nu_l,k}r)^{1-d/2}J_{\nu_l}(j_{\nu_l,k}r)\|_{L^\infty([0,1])}
\geq \nu_l^{1-d/2}J_{\nu_l}(\nu_l).
\end{equation*}
Therefore, when $l$ is sufficiently large,
\begin{equation*}
\frac{\|u_{klm}\|_{L^\infty(D_1(1/h(w),\epsilon))}}{\|u_{klm}\|_{L^\infty(\Omega_{\textrm{ball}})}}
\leq \sup_{r\leq 1/h(w)-\epsilon}z(r)^{1-d/2}\frac{J_{\nu_l}(\nu_lz(r))}{J_{\nu_l}(\nu_l)},
\end{equation*}
where
\begin{equation*}
z(r) = \frac{j_{\nu_l,k}r}{\nu_l} < 1-\epsilon.
\end{equation*}
By the Paris inequality \cite[Equation 10.14.7]{special}, for any $\nu\geq 0$ and $0<z\leq 1$, we have
\begin{equation}\label{parisbessel}
1 \leq \frac{J_{\nu}(\nu z)}{z^\nu J_{\nu}(\nu)} \leq e^{\nu(1-z)}.
\end{equation}
Since $f(z) = ze^{1-z}$ is an increasing function on $[0,1]$, it is easy to see that
\begin{equation*}
\sup_{z<1-\epsilon}z^{1-d/2}\frac{J_{\nu_l}(\nu_lz)}{J_{\nu_l}(\nu_l)}
\leq e^{(1-z)(d/2-1)}[ze^{(1-z)}]^{\nu_l+1-d/2} \lesssim q^{\nu_l+1-d/2},
\end{equation*}
where
\begin{equation}\label{defq}
q = (1-\epsilon)e^\epsilon < 1.
\end{equation}
This shows that
\begin{equation}\label{part1}
\frac{\|u_{klm}\|_{L^\infty(D_1(1/h(w),\epsilon))}}{\|u_{klm}\|_{L^\infty(\Omega_{\textrm{ball}})}}
\lesssim q^{\nu_l+1-d/2} \rightarrow 0.
\end{equation}
On the other hand, when $l$ is sufficiently large,
\begin{equation*}
\frac{\|u_{klm}\|_{L^\infty(D_2(1/h(w),\epsilon))}}{\|u_{klm}\|_{L^\infty(\Omega_{\textrm{ball}})}}
\leq \sup_{1/h(w)+\epsilon\leq r\leq 1}z(r)^{1-d/2}
\frac{|J_{\nu_l}(\nu_lz(r))|}{J_{\nu_l}(\nu_l)},
\end{equation*}
where
\begin{equation}\label{temp1}
1+\epsilon < z(r) = \frac{j_{\nu_l,k}r}{\nu_l} < 2h(w).
\end{equation}
By a result of Cauchy \cite[Equation 10.19.8]{special}, as $\nu\rightarrow\infty$, we have
\begin{equation}\label{temp2}
J_{\nu}(\nu) \sim \frac{2^{1/3}}{3^{2/3}\Gamma(2/3)\nu^{1/3}}.
\end{equation}
Furthermore, as $\nu\rightarrow\infty$ with $0<\beta<\pi/2$ fixed, the Bessel function of the first kind has Debye's expansion \cite[Equation 10.19.6]{special}
\begin{equation*}
J_{\nu}(\nu\sec\beta) \sim \left(\frac{2}{\pi\nu\tan\beta}\right)^{1/2}
\left[\cos\left(\nu(\tan\beta-\beta)-\frac{\pi}{4}\right)+O(\nu^{-1})\right],
\end{equation*}
which holds uniformly for $\beta$ on compact sets in $(0,\pi/2)$. This suggests that as $\nu\rightarrow\infty$ with $z>1$ fixed, we have
\begin{equation}\label{temp3}
J_{\nu}(\nu z) \sim \left(\frac{4}{\pi^2(z^2-1)}\right)^{1/4}
\left[\cos\left(\zeta(z)\nu-\frac{\pi}{4}\right)+O(\nu^{-1})\right]\nu^{-1/2}.
\end{equation}
where
\begin{equation*}
\zeta(z) = \sqrt{z^2-1}-\mathrm{arcsec}(z).
\end{equation*}
Since the expansion holds uniformly for $z$ on compact sets in $(1,\infty)$, it follows from \eqref{temp1}, \eqref{temp2}, and \eqref{temp3} that
\begin{equation*}
\sup_{1+\epsilon<z<2h(w)}z^{1-d/2}\frac{|J_{\nu_l}(\nu_lz)|}{J_{\nu_l}(\nu_l)}
\lesssim \nu_l^{-1/6}.
\end{equation*}
This shows that
\begin{equation}\label{part2}
\frac{\|u_{klm}\|_{L^\infty(D_2(1/h(w),\epsilon))}}{\|u_{klm}\|_{L^\infty(\Omega_{\textrm{ball}})}}
\lesssim \nu_l^{-1/6} \rightarrow 0.
\end{equation}
Combining \eqref{part1} and \eqref{part2} gives the desired result.
\end{proof}

The two-parameter $L^p$-localization for Schr\"{o}dinger eigenfunctions in balls is stated as follows.

\begin{theorem}\label{lp}
The notation is the same as in Theorem \ref{linfty}. Then for any $\epsilon>0$ and $p>4$, as $l,k\rightarrow\infty$ while keeping $l/k\rightarrow w>0$, we have
\begin{equation*}
\frac{\|u_{klm}\|_{L^p(D(1/h(w),\epsilon))}}{\|u_{klm}\|_{L^p(\Omega_{\textrm{ball}})}} \rightarrow 0,
\end{equation*}
where $h(w)>1$ is defined in \eqref{inverse}.
\end{theorem}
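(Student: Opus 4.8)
\textbf{Proof proposal for Theorem \ref{lp}.}

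The plan is to reduce the $L^p$ statement to estimates on weighted integrals of $|J_{\nu_l}(j_{\nu_l,k}r)|^p$ over the radial interval, exactly mirroring the split $D(1/h(w),\epsilon) = D_1(1/h(w),\epsilon)\cup D_2(1/h(w),\epsilon)$ from \eqref{shells} used in the proof of Theorem \ref{linfty}. Since $\|u_{klm}\|_{L^p(\Omega)}^p = \|Y_{lm}\|_{L^p(S^{d-1})}^p\int_0^1 |r^{1-d/2}J_{\nu_l}(j_{\nu_l,k}r)|^p r^{d-1}\,dr$, the angular factor cancels in the ratio, and after the substitution $z = j_{\nu_l,k}r/\nu_l$ the problem becomes: show that the $L^p$ mass of $z^{(1-d/2)p+d-1}|J_{\nu_l}(\nu_l z)|^p$ concentrated on $\{z\le 1-\epsilon'\}\cup\{z\ge 1+\epsilon'\}$ (for a suitable $\epsilon'$ depending on $\epsilon$, $w$, $h(w)$) is negligible compared with its mass near $z=1$. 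The point $z=1$ corresponds to $r = \nu_l/j_{\nu_l,k}\to 1/h(w)$ by the Lemma, which is why the localization sphere has radius $1/h(w)$.

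First I would bound the denominator from below. Near the turning point $z=1$, classical uniform Airy-type asymptotics for $J_{\nu_l}(\nu_l z)$ (of the kind already invoked via \cite{special} in the previous proof) give $J_{\nu_l}(\nu_l)\sim c\,\nu_l^{-1/3}$ and, more importantly, $|J_{\nu_l}(\nu_l z)|\gtrsim \nu_l^{-1/3}$ on a shrinking neighbourhood of $z=1$ of width $\gtrsim \nu_l^{-2/3}$; integrating the $p$th power over that neighbourhood yields $\int_0^1|r^{1-d/2}J_{\nu_l}(j_{\nu_l,k}r)|^p r^{d-1}\,dr \gtrsim \nu_l^{-p/3}\cdot\nu_l^{-2/3} = \nu_l^{-(p+2)/3}$ (with $r\approx 1/h(w)$ bounded away from $0$ and $1$, so the weight $r^{\cdots}$ is harmless). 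This is the cleanest lower bound; one could alternatively just keep $J_{\nu_l}(\nu_l)^p$ times a fixed-length interval where Debye's expansion \eqref{temp3} shows $|J_{\nu_l}(\nu_l z)|\gtrsim \nu_l^{-1/2}$, giving the weaker but still sufficient bound $\gtrsim \nu_l^{-p/2}$.

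Next, the numerator over $D_1$: on $[0,1/h(w)-\epsilon]$ we have $z\le 1-\epsilon''$ with $\epsilon''>0$ fixed, and the Paris inequality \eqref{parisbessel} combined with the elementary monotonicity of $z e^{1-z}$ gives, just as in \eqref{part1}, $z^{1-d/2}|J_{\nu_l}(\nu_l z)|/J_{\nu_l}(\nu_l)\lesssim q^{\nu_l}$ with $q = (1-\epsilon'')e^{\epsilon''}<1$; raising to the $p$th power and integrating (the interval has length $\le 1$, the weight is bounded) shows the $D_1$-numerator is $\lesssim J_{\nu_l}(\nu_l)^p q^{p\nu_l}\lesssim \nu_l^{-p/3}q^{p\nu_l}$, which is exponentially small and hence dominated by the polynomially small denominator. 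For the numerator over $D_2 = [1/h(w)+\epsilon,1]$, the key is that $z$ ranges over a fixed compact subinterval of $(1,2h(w))$, so Debye's oscillatory expansion \eqref{temp3} applies uniformly and gives $|J_{\nu_l}(\nu_l z)|\lesssim \nu_l^{-1/2}$ there; thus $z^{(1-d/2)p+d-1}|J_{\nu_l}(\nu_l z)|^p\lesssim \nu_l^{-p/2}$ pointwise and, after integrating over a bounded interval, the $D_2$-numerator is $\lesssim \nu_l^{-p/2}$. Dividing by $J_{\nu_l}(\nu_l)^p\sim \nu_l^{-p/3}$ gives a ratio $\lesssim \nu_l^{-p/2+p/3} = \nu_l^{-p/6}\to 0$.

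The only genuinely delicate point — and the reason for the hypothesis $p>4$ — is the comparison between the true denominator $\|u_{klm}\|_{L^p(\Omega_{\textrm{ball}})}^p$ and the crude lower bound $J_{\nu_l}(\nu_l)^p$ times a unit-length interval. If one uses the $\nu_l^{-p/2}$ lower bound from Debye near a fixed $z>1$ rather than the sharp $\nu_l^{-p/3}$ turning-point bound, then the $D_2$-ratio is only $\lesssim \nu_l^{-p/2}/\nu_l^{-p/2}\cdot(\text{const})$, which does not decay; one must instead capture the Airy bump at $z=1$, where $|J_{\nu_l}(\nu_l z)|$ is of order $\nu_l^{-1/3}$ on an interval of width $\nu_l^{-2/3}$, contributing $\nu_l^{-(p+2)/3}$ to the denominator, and this beats the $D_2$-numerator $\nu_l^{-p/2}$ precisely when $p/2 > (p+2)/3$, i.e.\ $p>4$. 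So the main obstacle is organizing the turning-point analysis carefully enough to extract the $\nu_l^{-(p+2)/3}$ lower bound on the denominator; everything else (the $D_1$ part, the angular cancellation, the change of variables) is routine and parallels Theorem \ref{linfty} verbatim.
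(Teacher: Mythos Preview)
Your proposal is correct and follows essentially the same route as the paper: cancel the angular factor, substitute $z=j_{\nu_l,k}r/\nu_l$, use the Paris inequality \eqref{parisbessel} on $D_1$ and Debye's expansion \eqref{temp3} on $D_2$, and bound the denominator from below via the Airy turning-point asymptotic to extract $\nu_l^{-(p+2)/3}$, which against the $\nu_l^{-p/2}$ numerator on $D_2$ gives exactly the threshold $p>4$. One small inconsistency to clean up: in your third paragraph you divide the $D_2$ numerator by $J_{\nu_l}(\nu_l)^p\sim\nu_l^{-p/3}$ and claim a ratio $\nu_l^{-p/6}\to 0$, but that point value is not the denominator --- the correct lower bound is the $\nu_l^{-(p+2)/3}$ you state in your first and last paragraphs, and it is precisely this (not $\nu_l^{-p/3}$) that produces the restriction $p>4$.
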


\begin{proof}
For convenience, set $\alpha = (1-d/2)p+d-1$. For any $r>0$ and $\epsilon>0$, let $D_1(r,\epsilon)$ and $D_2(r,\epsilon)$ be the domains defined in \eqref{shells}. When $l$ is sufficiently large, we have
\begin{equation*}
\begin{split}
\frac{\|u_{klm}\|^p_{L^p(D_1(1/h(w),\epsilon))}}{\|u_{klm}\|^p_{L^p(\Omega_{\textrm{ball}})}}
&= \frac{\int_0^{1/h(w)-\epsilon}|r^{1-d/2}J_{\nu_l}(j_{\nu_l,k}r)|^pr^{d-1}dr}
{\int_0^1|r^{1-d/2}J_{\nu_l}(j_{\nu_l,k}r)|^pr^{d-1}dr} \\
&= \frac{\int_0^{(1/h(w)-\epsilon)j_{\nu_l,k}/\nu_l}z^\alpha|J_{\nu_l}(\nu_lz)|^pdz}
{\int_0^{j_{\nu_l,k}/\nu_l}z^\alpha|J_{\nu_l}(\nu_lz)|^pdz} \\
&\leq \frac{\int_0^{1-\epsilon}z^\alpha J_{\nu_l}(\nu_lz)^pdz}
{\int_0^1z^\alpha J_{\nu_l}(\nu_lz)^pdz}.
\end{split}
\end{equation*}
Using the Paris inequality \eqref{parisbessel} and noting that $f(z) = ze^{1-z}$ is an increasing function on $[0,1]$, we have
\begin{equation*}
\begin{split}
\int_0^{1-\epsilon}z^\alpha J_{\nu_l}(\nu_lz)^pdz
&\leq J_{\nu_l}(\nu_l)^p\int_0^{1-\epsilon}z^\alpha[ze^{(1-z)}]^{p\nu_l}dz \\
&\leq J_{\nu_l}(\nu_l)^p\int_0^{1-\epsilon}[ze^{(1-z)}]^{p\nu_l+\alpha}e^{-\alpha(1-z)}dz \\
&\lesssim J_{\nu_l}(\nu_l)^pq^{p\nu_l+\alpha}.
\end{split}
\end{equation*}
where $q\in(0,1)$ is the constant defined in \eqref{defq}. On the other hand, it follows from \cite[Equation 10.19.8]{special} that as $\nu\rightarrow\infty$ with $a\in\Rnum$ fixed, we have
\begin{equation*}
J_{\nu}(\nu+a\nu^{1/3}) = 2^{1/3}\Ai(-2^{1/3}a)\nu^{-1/3}[1+O(\nu^{-2/3})].
\end{equation*}
Since $\Ai(z)>0$ when $|z|$ is sufficiently small, there exists $a>0$ such that
\begin{equation}\label{middle}
J_{\nu}(z) \gtrsim \nu^{-1/3},\;\;\;\textrm{whenever\;}|z-\nu|\leq a\nu^{1/3}.
\end{equation}
This implies that
\begin{equation*}
\int_0^1z^\alpha J_{\nu_l}(\nu_lz)^pdz \geq \int_{1-a\nu_l^{-2/3}}^1z^\alpha J_{\nu_l}(\nu_lz)^pdz
\gtrsim \nu_l^{-(p+2)/3}.
\end{equation*}
Thus, for any $p\geq 1$, it follows from \eqref{temp2} that
\begin{equation}\label{lppart1}
\frac{\|u_{klm}\|^p_{L^p(D_1(1/h(w),\epsilon))}}{\|u_{klm}\|^p_{L^p(\Omega_{\textrm{ball}})}}
\lesssim \nu_l^{(p+2)/3}J_{\nu_l}(\nu_l)^pq^{p\nu_l+\alpha}
\lesssim \nu_l^{2/3}q^{p\nu_l+\alpha}\rightarrow 0.
\end{equation}
On the other hand, when $l$ is sufficiently large, we have
\begin{equation*}
\begin{split}
\frac{\|u_{klm}\|^p_{L^p(D_2(1/h(w),\epsilon))}}{\|u_{klm}\|^p_{L^p(\Omega_{\textrm{ball}})}}
&= \frac{\int_{1/h(w)+\epsilon}^1|r^{1-d/2}J_{\nu_l}(j_{\nu_l,k}r)|^pr^{d-1}dr}
{\int_0^1|r^{1-d/2}J_{\nu_l}(j_{\nu_l,k}r)|^pr^{d-1}dr} \\
&= \frac{\int_{(1/h(w)+\epsilon)j_{\nu_l,k}/\nu_l}^{j_{\nu_l,k}/\nu_l}z^\alpha|J_{\nu_l}(\nu_lz)|^pdz}
{\int_0^{j_{\nu_l,k}/\nu_l}z^\alpha|J_{\nu_l}(\nu_lz)|^pdz} \\
&\leq \frac{\int_{1+\epsilon}^{h(w)+\epsilon}z^\alpha|J_{\nu_l}(\nu_lz)|^pdz}
{\int_1^{h(w)-\epsilon}z^\alpha|J_{\nu_l}(\nu_lz)|^pdz}.
\end{split}
\end{equation*}
When $1+\epsilon\leq z\leq h(w)+\epsilon$, it follows from \eqref{temp3} that
\begin{equation*}
|J_{\nu_l}(\nu_lz)| \lesssim \nu_l^{-1/2}.
\end{equation*}
This shows that
\begin{equation*}
\begin{split}
\int_{1+\epsilon}^{h(w)+\epsilon}z^\alpha|J_{\nu_l}(\nu_lz)|^pdz
&\lesssim \nu_l^{-p/2}\int_{1+\epsilon}^{h(w)+\epsilon}z^\alpha dz \lesssim \nu_l^{-p/2}.
\end{split}
\end{equation*}
Moreover, it follows from \eqref{middle} that
\begin{equation*}
\int_1^{h(w)-\epsilon}z^\alpha|J_{\nu_l}(\nu_lz)|^pdz
\geq \int_1^{1+a\nu_l^{-2/3}}z^\alpha|J_{\nu_l}(\nu_lz)|^pdz \gtrsim \nu_l^{-(p+2)/3}.
\end{equation*}
Therefore, when $p>4$, we have
\begin{equation}\label{lppart2}
\frac{\|u_{klm}\|^p_{L^p(D_2(1/h(w),\epsilon))}}{\|u_{klm}\|^p_{L^p(\Omega_{\textrm{ball}})}}
\lesssim \nu_l^{(p+2)/3-p/2} \rightarrow 0.
\end{equation}
Combining \eqref{lppart1} and \eqref{lppart2} gives the desired result.
\end{proof}

\begin{remark}[\textbf{localized radii}]
The above two theorems characterize the two-parameter high-frequency localization for the Schr\"{o}dinger eigenfunctions $u_{klm}$ in the unit ball. As $l,k\rightarrow\infty$ while keeping their ratio $l/k\rightarrow w>0$ as a constant, the eigenfunctions are localized around an intermediate sphere with radius $1/h(w)\in (0,1)$. From \eqref{inverse}, we can see that the localized radius is independent of the strength $c$ of the inverse square potential. Fig. \ref{disk}(a)-(c) illustrate the graphs and heat maps of the eigenfunctions under different choices of $k$ and $l$ in the case of $d = 2$, $c = 1$, and $w = 5$, where the localized radius is $1/h(5)\approx 0.5$. The relationship between the $l$-$k$ ratio $w$ and the localized radius $1/h(w)$ is depicted in Fig. \ref{comparison}(c), from which we can see that with the increase of the $l$-$k$ ratio, the localized radius will also increase.
\end{remark}

\begin{figure}
\centerline{\includegraphics[width=1\textwidth]{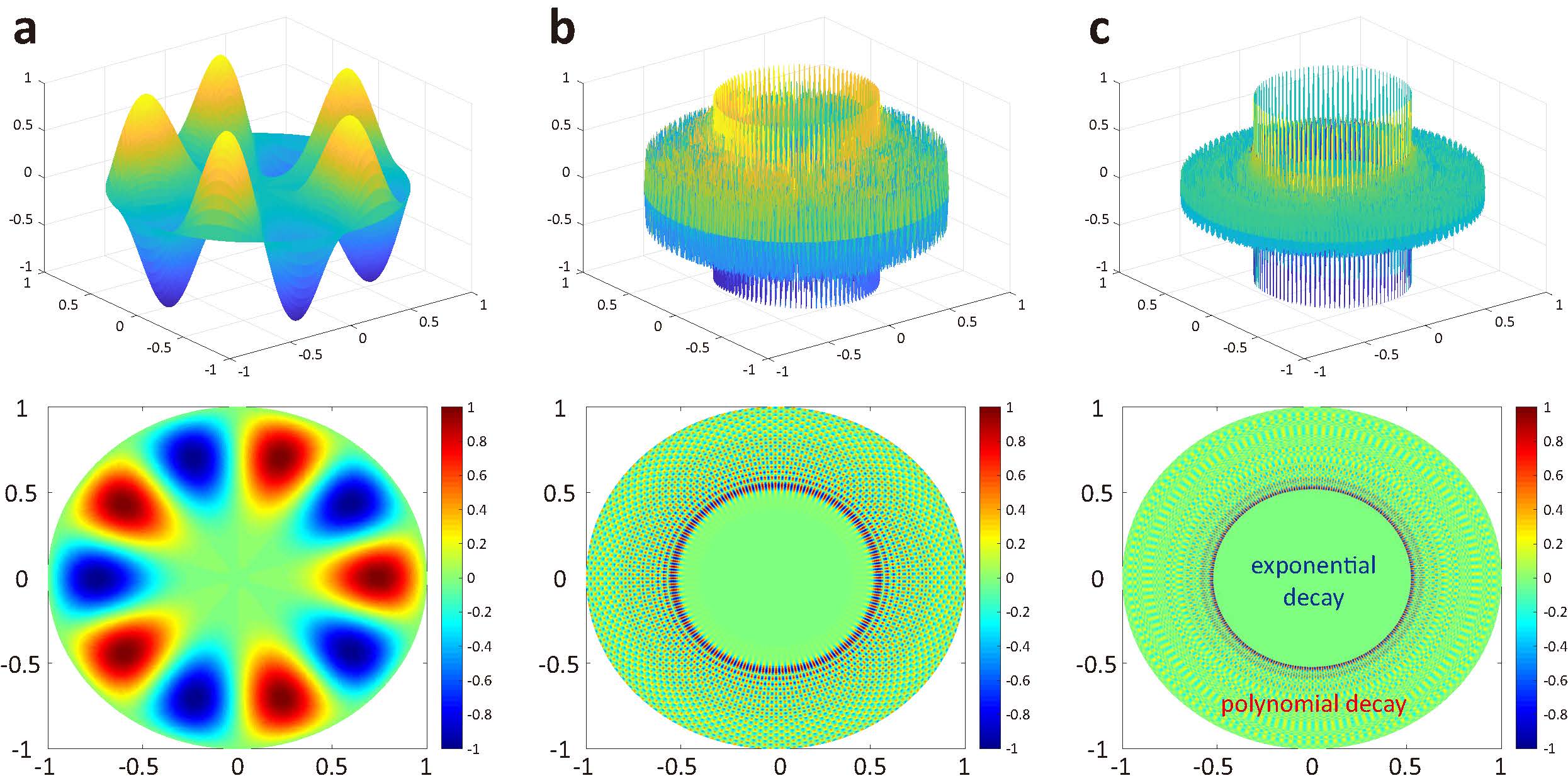}}
\caption{\textbf{Two-parameter localization for Schr\"{o}dinger eigenfunctions in the unit disk.} As $l,k\rightarrow\infty$ while keeping $l/k\rightarrow w>0$, the eigenfunctions $u_{klm}$ are localized around a sphere with radius $1/h(w)\in(0,1)$. The eigenfunctions are exhibited by both the three-dimensional graphs (upper) and two-dimensional heat maps with colorbars (lower). The eigenfunctions decay exponentially inside the localized circle and decay polynomially outside. (a) $l = 5$ and $k = 1$. (b) $l = 100$ and $k = 20$. (c) $l = 10000$ and $k = 200$. In (a)-(c), we keep the $l$-$k$ ratio as $w = 5$, in which case the localized radius is $1/h(5)\approx 0.5$. The parameter $c = 1$ in all cases and the eigenfunctions are normalized so that the supreme norm is $1$.}\label{disk}
\end{figure}

\begin{remark}[\textbf{decaying speed}]
Although the Schr\"{o}dinger eigenfunctions are mostly distributed around an intermediate sphere as $l,k\rightarrow\infty$ while keeping $l/k\rightarrow w$, the decaying speeds of the eigenfunctions inside and outside the localized sphere are completely different. From \eqref{part1} and \eqref{part2}, we can see that the eigenfunctions decay at the exponential speed of $q^{\nu_l}$ inside the localized sphere and decay at the much lower polynomial speed of $\nu_l^{-1/6}$ outside the localized sphere. This is clearly seen from Fig. \ref{disk}(c) in the two-dimensional case. When $l = 10000$ and $l/k = 5$, the eigenfunction almost vanishes inside the localized circle, but still fluctuates within a narrow range of zero outside the localized circle. In addition, from \eqref{lppart1} and \eqref{lppart2}, we can see that the eigenfunctions are $L^p$-localized for any $p\geq 1$ inside the localized sphere and are $L^p$-localized for any $p>4$ outside the localized sphere.
\end{remark}

Imitating the proof of the above two theorems or using the technique presented in \cite{nguyen2013localization}, we can obtain the one-parameter high-frequency localization for Schr\"{o}dinger eigenfunctions in balls as $l\rightarrow\infty$ and $k$ is fixed or as $k\rightarrow\infty$ and $l$ is fixed.

\begin{corollary}
For any $\epsilon>0$, let $A(\epsilon) = \{x\in\Rnum^d:|x|\leq 1-\epsilon\}$. Then for any $\epsilon>0$ and $1\leq p\leq\infty$, we have
\begin{equation*}
\lim_{l\rightarrow\infty}\frac{\|u_{klm}\|_{L^p(A(\epsilon))}}{\|u_{klm}\|_{L^p(\Omega_{\textrm{ball}})}} = 0.
\end{equation*}
\end{corollary}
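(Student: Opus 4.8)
The plan is to recycle the proofs of Theorems \ref{linfty} and \ref{lp}, noting that the present statement is nothing but the degenerate case $w\to\infty$ of the two-parameter picture: from \eqref{inverse} one has $h(w)\to 1$ as $w\to\infty$, so the localization sphere of radius $1/h(w)$ merges with the unit sphere. Consequently the annular region $D_2$ (the ``outside'' of the localized sphere) that forced the restriction $p>4$ in Theorem \ref{lp} simply disappears, and only the ``inside'' estimates survive; this is exactly why the conclusion can hold for all $1\le p\le\infty$. Here $k$ and $m$ are held fixed and only $l\to\infty$.

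First I would record the single fact about Bessel zeros that is needed: as $l\to\infty$ with $k$ fixed, $j_{\nu_l,k}/\nu_l\to 1$. This comes from the same uniform asymptotic $j_{\nu_l,k}=\nu_l z(\zeta)+O(\nu_l^{-1})$, $\zeta=\nu_l^{-2/3}a_k$, used in the Lemma: with $a_k$ fixed and $\nu_l\to\infty$ we get $\zeta\to 0$, hence $z(\zeta)\to z(0)=1$; combined with the elementary bound $j_{\nu,k}>\nu$ this yields $j_{\nu_l,k}/\nu_l\to 1$. In particular, for any fixed $\epsilon>0$, once $l$ is large the substitution $z=j_{\nu_l,k}r/\nu_l$ maps $[0,1-\epsilon]$ into $[0,1-\epsilon/2]$, while the full range $[0,1]$ in $r$ maps onto $[0,j_{\nu_l,k}/\nu_l]$, an interval that genuinely contains $z=1$.

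For $p=\infty$ I would run the argument of Theorem \ref{linfty}: bound the denominator below by $\nu_l^{1-d/2}J_{\nu_l}(\nu_l)$ using $j_{\nu_l,k}>\nu_l$, bound the numerator over $A(\epsilon)$ by $\nu_l^{1-d/2}\sup_{0\le z\le 1-\epsilon/2}z^{1-d/2}|J_{\nu_l}(\nu_l z)|$, and then apply the Paris inequality \eqref{parisbessel} together with the monotonicity of $ze^{1-z}$ on $[0,1]$ to get, verbatim as in Theorem \ref{linfty}, a bound $\lesssim q^{\nu_l+1-d/2}\to 0$ with $q=(1-\epsilon/2)e^{\epsilon/2}<1$. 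For $1\le p<\infty$ I would repeat the $D_1$-estimate of Theorem \ref{lp}: after the substitution $z=j_{\nu_l,k}r/\nu_l$ the ratio of $L^p$-norms is dominated by $\int_0^{1-\epsilon/2}z^\alpha J_{\nu_l}(\nu_l z)^p\,dz\big/\int_0^1 z^\alpha J_{\nu_l}(\nu_l z)^p\,dz$, the numerator being $\lesssim J_{\nu_l}(\nu_l)^p q^{p\nu_l+\alpha}$ by \eqref{parisbessel} and the denominator $\gtrsim \nu_l^{-(p+2)/3}$ by \eqref{middle}; the Cauchy asymptotic \eqref{temp2} for $J_{\nu_l}(\nu_l)$ then gives $\lesssim \nu_l^{2/3}q^{p\nu_l+\alpha}\to 0$.

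The only point needing a little care, rather than a genuine obstacle, is the behaviour near $r=0$ when $d\ge 3$, where the weights $z^{1-d/2}$ and $z^\alpha$ are singular; but the super-exponential smallness $J_{\nu_l}(\nu_l z)\sim(\nu_l z/2)^{\nu_l}/\Gamma(\nu_l+1)$ as $z\to0^+$ swamps these weights once $\nu_l$ is large, so integrability and all the estimates go through exactly as in Theorems \ref{linfty} and \ref{lp}. One may also observe that the same proof covers the slightly more general regime $k=k(l)\to\infty$ with $k/l\to0$, since that still forces $j_{\nu_l,k}/\nu_l\to1$.
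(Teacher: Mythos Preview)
Your proposal is correct and is precisely what the paper intends: the paper does not give an explicit proof of this Corollary but simply writes ``Imitating the proof of the above two theorems\ldots'', i.e.\ rerun the $D_1$-parts of Theorems \ref{linfty} and \ref{lp} with the limit $j_{\nu_l,k}/\nu_l\to 1$ replacing $j_{\nu_l,k}/\nu_l\to h(w)$. Your observation that the $D_2$-region disappears (hence the restriction $p>4$ is no longer needed) is exactly the point, and your derivation of $j_{\nu_l,k}/\nu_l\to 1$ from the uniform asymptotic in the Lemma is the natural way to supply the one missing ingredient.
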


The above corollary shows that the eigenfunctions are localized around the boundary of the unit ball as $l\rightarrow\infty$ and $k$ is fixed. These eigenfunctions are called whispering gallery modes \cite{nguyen2013localization}.

\begin{corollary}
For any $\epsilon>0$, let $B(\epsilon) = \{x\in\Rnum^d:\epsilon\leq |x|\leq 1\}$. Then for any $\epsilon>0$ and
\begin{equation*}
\frac{2d}{d-1}<p\leq\infty,
\end{equation*}
we have
\begin{equation*}
\lim_{k\rightarrow\infty}\frac{\|u_{klm}\|_{L^p(B(\epsilon))}}{\|u_{klm}\|_{L^p(\Omega_{\textrm{ball}})}} = 0.
\end{equation*}
\end{corollary}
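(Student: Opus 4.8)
The plan is to follow the architecture of the proof of Theorem~\ref{lp}, but the analysis is considerably lighter: here $l$ (hence $\nu_l$) is held fixed while $k\to\infty$, so the uniform large-order estimates used there (the Paris inequality, Debye's expansion) are replaced by the elementary small- and large-argument asymptotics of the single Bessel function $J_{\nu_l}$. First I would reduce to a one-dimensional statement. Writing $u_{klm}=v(r)Y_{lm}(\xi)$ with $v(r)=r^{1-d/2}J_{\nu_l}(j_{\nu_l,k}r)$, the angular factor $Y_{lm}$ cancels from the quotient, so it suffices to prove that $\big(\int_\epsilon^1|v|^p r^{d-1}\,dr\big)\big/\big(\int_0^1|v|^p r^{d-1}\,dr\big)\to0$, together with the obvious $L^\infty$ analogue. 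Substituting $s=j_{\nu_l,k}r$ and setting $\alpha=(1-d/2)p+d-1$ as in Theorem~\ref{lp}, this quotient becomes
\begin{equation*}
\frac{\int_{\epsilon j_{\nu_l,k}}^{j_{\nu_l,k}}s^\alpha|J_{\nu_l}(s)|^p\,ds}{\int_0^{j_{\nu_l,k}}s^\alpha|J_{\nu_l}(s)|^p\,ds}.
\end{equation*}

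The key elementary computation is the exponent count $\alpha-p/2+1=(d-1)(2-p)/2+1=d+(1-d)p/2$, which is strictly negative precisely when $p>2d/(d-1)$. Combining this with $|J_{\nu_l}(s)|=O(s^{-1/2})$ as $s\to\infty$, and with $J_{\nu_l}(s)=O(s^{\nu_l})$ as $s\to0^+$ so that $s^\alpha|J_{\nu_l}(s)|^p=O(s^{\alpha+p\nu_l})$ with $\alpha+p\nu_l\ge d-1>-1$ (using $\nu_l\ge d/2-1$), shows that $s^\alpha|J_{\nu_l}(s)|^p$ is integrable on $(0,\infty)$ for every $p$ in the stated range. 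The numerator is then dominated by the tail $\int_{\epsilon j_{\nu_l,k}}^\infty s^\alpha|J_{\nu_l}(s)|^p\,ds$, which tends to $0$ because $j_{\nu_l,k}\to\infty$; the denominator is bounded below by $\int_0^{s_0}s^\alpha|J_{\nu_l}(s)|^p\,ds>0$ for any fixed $s_0$, a bound that is independent of $k$ and is in force as soon as $j_{\nu_l,k}>s_0$. Hence the quotient tends to $0$, which is the claim for finite $p$.

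For $p=\infty$ the quotient equals $\big(\sup_{\epsilon\le r\le1}|v(r)|\big)\big/\big(\sup_{0\le r\le1}|v(r)|\big)$. Using $|J_{\nu_l}(s)|\lesssim s^{-1/2}$ for $s\ge1$ and $1-d/2-1/2=(1-d)/2<0$, one gets $\sup_{\epsilon\le r\le1}|v(r)|=j_{\nu_l,k}^{d/2-1}\sup_{\epsilon j_{\nu_l,k}\le s\le j_{\nu_l,k}}s^{1-d/2}|J_{\nu_l}(s)|\lesssim j_{\nu_l,k}^{-1/2}$; for the denominator, evaluating $v$ at $s^\ast/j_{\nu_l,k}$, where $s^\ast$ is the abscissa of the first positive maximum of $J_{\nu_l}$, gives $\sup_{0\le r\le1}|v(r)|\gtrsim\max\{1,\,j_{\nu_l,k}^{d/2-1}\}$ once $j_{\nu_l,k}>s^\ast$. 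Dividing, the quotient is $\lesssim j_{\nu_l,k}^{-1/2}\to0$, completing the proof.

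I expect the only genuine subtlety to be the bookkeeping around the threshold $p=2d/(d-1)$: one must verify that $\alpha-p/2+1$ has exactly the claimed sign and that the large-$s$ decay of $J_{\nu_l}$ is precisely $s^{-1/2}$, so that the tail integral converges in the stated $p$-range and, correspondingly, the condition $p>2d/(d-1)$ is sharp; the remaining estimates are routine. It is also worth noting explicitly that the lower bound on the denominator is uniform in $k$ because it is taken over the fixed window $(0,s_0)$ independent of $k$, so no further argument is needed there.
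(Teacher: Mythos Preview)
Your proof is correct and, for finite $p$, follows essentially the same route as the paper: reduce to the radial quotient, substitute $s=j_{\nu_l,k}r$, observe that $\alpha-p/2<-1$ exactly when $p>2d/(d-1)$, and use $|J_{\nu_l}(s)|\lesssim s^{-1/2}$ to show the numerator is a vanishing tail while the denominator stays bounded below by a fixed positive integral. The paper phrases the denominator bound as $\int_0^{j_{\nu_l,k}}\to\int_0^\infty>0$, whereas you bound it below by $\int_0^{s_0}$; you also add the explicit check of integrability at $s=0$ via $\alpha+p\nu_l\ge d-1>-1$, which the paper leaves implicit. For $p=\infty$ the paper gives no argument (it writes ``We only consider $L^p$-localization for $p<\infty$''), so your direct $L^\infty$ estimate via $|v(r)|\lesssim j_{\nu_l,k}^{-1/2}$ on $[\epsilon,1]$ against the lower bound at $r=s^\ast/j_{\nu_l,k}$ is a genuine addition rather than a variant.
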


\begin{proof}
We only consider $L^p$-localization for $p<\infty$. When $k$ is sufficiently large, we have
\begin{equation*}
\frac{\|u_{klm}\|^p_{L^p(B(\epsilon))}}{\|u_{klm}\|^p_{L^p(\Omega_{\textrm{ball}})}}
= \frac{\int_{\epsilon}^1|r^{1-d/2}J_{\nu_l}(j_{\nu_l,k}r)|^pr^{d-1}dr}
{\int_0^1|r^{1-d/2}J_{\nu_l}(j_{\nu_l,k}r)|^pr^{d-1}dr} \\
= \frac{\int_{\epsilon j_{\nu_l,k}}^{j_{\nu_l,k}}z^\alpha|J_{\nu_l}(z)|^pdz}
{\int_0^{j_{\nu_l,k}}z^\alpha|J_{\nu_l}(z)|^pdz},
\end{equation*}
where $\alpha = (1-d/2)p+d-1$. It is known that $j_{\nu,k}\sim m\pi$ as $m\rightarrow\infty$ for any $\nu\geq 0$ \cite[Equation 10.21.19]{special}. Moreover, as $z\rightarrow\infty$ with $\nu\geq 0$ fixed, it follows from \cite[Equation 10.17.3]{special} that
\begin{equation*}
J_{\nu}(z) \sim \left(\frac{2}{\pi z}\right)^{1/2}\cos\left(z-\frac{\nu\pi}{2}-\frac{\pi}{4}\right).
\end{equation*}
Thus, when $k$ is sufficiently large, we have
\begin{equation*}
\int_{\epsilon j_{\nu_l,k}}^{j_{\nu_l,k}}z^\alpha|J_{\nu_l}(z)|^pdz
\lesssim  \int_{\epsilon j_{\nu_l,k}}^{j_{\nu_l,k}}z^{\alpha-p/2}dz
\end{equation*}
When $p>2d/(d-1)$, we have $\alpha-p/2<-1$. It thus follows from Cauchy's criterion that
\begin{equation*}
\lim_{k\rightarrow\infty}\int_{\epsilon j_{\nu_l,k}}^{j_{\nu_l,k}}z^\alpha|J_{\nu_l}(z)|^pdz = 0.
\end{equation*}
Moreover, it is obvious that
\begin{equation*}
\lim_{k\rightarrow\infty}\int_0^{j_{\nu_l,k}}z^\alpha|J_{\nu_l}(z)|^pdz
= \int_0^\infty z^\alpha|J_{\nu_l}(z)|^pdz > 0.
\end{equation*}
Combining the above two equations gives the desired result.
\end{proof}

The above corollary shows that the eigenfunctions are localized around the center of the unit ball as $k\rightarrow\infty$ and $l$ is fixed. These eigenfunctions are called focusing modes \cite{nguyen2013localization}.

\begin{remark}[\textbf{relationship between one-parameter and two-parameter localization}]
In fact, both the whispering gallery modes and focusing modes can be viewed as limiting cases of our two-parameter localization. As $l\rightarrow\infty$ and $k$ is fixed, we have $l/k\rightarrow\infty$. In the limiting case of $w\rightarrow\infty$, the localized radius $1/h(w)\rightarrow 1$ and thus the eigenfunctions are localized around the boundary of the ball, giving rise to whispering gallery modes. On the other hand, as $k\rightarrow\infty$ and $l$ is fixed, we have $l/k\rightarrow 0$. In the limiting case of $w\rightarrow 0$, the localized radius $1/h(w)\rightarrow 0$ and thus the eigenfunctions are localized around the center of the ball, giving rise to focusing modes.
\end{remark}

\begin{remark}[\textbf{quantum mechanistic picture}]
From the aspect of quantum mechanics, as the azimuthal quantum number $l$ increases, the angular momentum of the particle also increases, which throws the particle outwards and thus induces localization around the boundary. On the other hand, since the inverse square potential $V(x) = c^2/|x|^2$ corresponds to an expulsive force, the energy of the particle is high when it is close to the origin and is low when it is far away from the origin. As the principal quantum number $k$ increases, the energy of the particle also increases, which pulls the particle towards the origin and thus induces localization around the center. These provide a qualitative explanation of whispering gallery modes and focusing modes.
\end{remark}

\section{Localization for Schr\"{o}dinger eigenfunctions in spherical shells}
In the section, we consider the eigenvalue problem \eqref{eigenproblem} in the domain
\begin{equation*}
\Omega_{\textrm{shell}} = \{x\in\Rnum^d:1<|x|<R\},
\end{equation*}
which is a spherical shell centered at the origin with inner radius 1 and outer radius $R>1$. Similarly, we represent the eigenfunction $u$ in the variable separation form as $u(x) = v(r)Y_{lm}(\xi)$, where $v(r)$ is the radial part and $Y_{lm}(\xi)$ is the angular part. We then define $t = \sqrt{\lambda}r$ and set $\hat{v}(t) = r^{\frac{d}{2}-1}v(r)$. Then the new function $\hat{v}$ turns out to be the solution of the Bessel equation
\begin{equation*}
t^2\frac{\partial^2\hat{v}}{\partial t^2}
+t\frac{\partial\hat{v}}{\partial t}+(t^2-\nu_l^2)\hat{v} = 0,
\end{equation*}
whose solution is given by
\begin{equation*}
\hat{v}(t) = \alpha J_{\nu_l}(t)+\beta Y_{\nu_l}(t),
\end{equation*}
where $J_{\nu_l}$ and $Y_{\nu_l}$ are Bessel functions of the first and second kinds of order $\nu_l$, respectively. Thus, the radial part of the eigenfunction is given by
\begin{equation*}
v(r) = r^{1-\frac{d}{2}}[\alpha J_{\nu_l}(\sqrt{\lambda}r)+\beta Y_{\nu_l}(\sqrt{\lambda}r)].
\end{equation*}
With this expression, the Dirichlet boundary condition is converted into the system of linear equations
\begin{equation*}\left\{
\begin{split}
& \alpha J_{\nu_l}(\sqrt{\lambda})+\beta Y_{\nu_l}(\sqrt{\lambda}) = 0,\\
& \alpha J_{\nu_l}(\sqrt{\lambda}R)+\beta Y_{\nu_l}(\sqrt{\lambda}R) = 0.
\end{split}\right.
\end{equation*}
Since $\alpha$ and $\beta$ are not simultaneously zero, the determinant of the coefficient matrix of the above system of linear equations must be zero, that is,
\begin{equation*}
J_{\nu_l}(\sqrt{\lambda})Y_{\nu_l}(\sqrt{\lambda}R)
-Y_{\nu_l}(\sqrt{\lambda})J_{\nu_l}(\sqrt{\lambda}R) = 0.
\end{equation*}
For each $l\geq 0$, the eigenvalue problem \eqref{eigenproblem} has infinitely many positive eigenvalues
\begin{equation*}
\lambda_{lk} = a_{\nu_l,k}^2,\;\;\;k = 1,2,\cdots,
\end{equation*}
where $a_{\nu_l,k}$ is the $k$th zero of the cross product
\begin{equation}\label{crossproduct}
f_{\nu_l,R}(z) = J_{\nu_l}(z)Y_{\nu_l}(Rz)-Y_{\nu_l}(z)J_{\nu_l}(Rz).
\end{equation}
Since $J_{\nu_l}$ and $Y_{\nu_l}$ do not have common positive zeros \cite[Equation 10.9.30]{special}, all basis eigenfunctions of the eigenvalue problem \eqref{eigenproblem} can be represented as
\begin{equation*}
u_{klm}(r,\xi)
= r^{1-\frac{d}{2}}[J_{\nu_l}(a_{\nu_l,k})Y_{\nu_l}(a_{\nu_l,k}r)
-Y_{\nu_l}(a_{\nu_l,k})J_{\nu_l}(a_{\nu_l,k}r)]Y_{lm}(\xi)
= r^{1-\frac{d}{2}}F_{\nu_l,k}(a_{\nu_l,k}r)Y_{lm}(\xi),
\end{equation*}
where
\begin{equation*}
F_{\nu_l,k}(z) = J_{\nu_l}(a_{\nu_l,k})Y_{\nu_l}(z)-Y_{\nu_l}(a_{\nu_l,k})J_{\nu_l}(z)
\end{equation*}
is a cylinder function, which is defined as a linear combination of Bessel functions of the first and second kinds.

We next investigate the two-parameter high-frequency localization for Schr\"{o}dinger eigenfunctions in spherical shells as $l,k\rightarrow\infty$ simultaneously while keeping $l/k\rightarrow w$.

\begin{lemma}\label{limittwosides}
As $l,k\rightarrow\infty$ while keeping $l/k\rightarrow w>0$, we have
\begin{equation*}
\frac{a_{\nu_l,k}}{\nu_l} \rightarrow \frac{g_R(w)}{w},
\end{equation*}
where $g_R(w)$ is the unique solution of the initial value problem
\begin{equation}\label{initial}
\frac{dy}{dw} =
\frac{\arccos\left(\frac{w}{Ry}\right)-\arccos\left(\frac{w}{y}\right)I_{\set{|w|\leq |y|}}}
{R\sqrt{1-\left(\frac{w}{Ry}\right)^2}-\sqrt{1-\left(\frac{w}{y}\right)^2}I_{\set{|w|\leq|y|}}},\;\;\;
y(0) = \frac{\pi}{R-1}.
\end{equation}
Moreover, both $g_R(w)$ and $w/g_R(w)$ are strictly increasing functions on $(0,\infty)$ with
\begin{equation*}
\lim_{w\rightarrow 0}\frac{w}{g_R(w)} = 0,\;\;\;
\lim_{w\rightarrow\infty}\frac{w}{g_R(w)} = R.
\end{equation*}
\end{lemma}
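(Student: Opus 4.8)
The plan is to locate the zeros $a_{\nu_l,k}$ of the cross product $f_{\nu_l,R}$ in \eqref{crossproduct} through the modulus--phase representation $J_\nu=M_\nu\cos\theta_\nu$, $Y_\nu=M_\nu\sin\theta_\nu$, under which $f_{\nu,R}(z)=M_\nu(z)M_\nu(Rz)\sin(\theta_\nu(Rz)-\theta_\nu(z))$. Since $M_\nu>0$ and (for $\nu>1/2$) decreasing, while $\theta_\nu$ is smooth and strictly increasing with $\theta_\nu(0^+)=-\pi/2$, the map $z\mapsto\theta_\nu(Rz)-\theta_\nu(z)$ is continuous and strictly increasing, tends to $0$ as $z\to0^+$ and to $+\infty$; hence $a_{\nu_l,k}$ is \emph{exactly} the point at which $\theta_{\nu_l}(Rz)-\theta_{\nu_l}(z)=k\pi$. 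I would then introduce the phase integral $\Phi_\nu(x)=\sqrt{x^2-\nu^2}-\nu\arccos(\nu/x)$ (so $\Phi_\nu'(x)=\sqrt{1-(\nu/x)^2}$ and $\Phi_\nu(\nu)=0$) and use the Wronskian identity $\theta_\nu'(x)=\tfrac{2}{\pi x\,M_\nu(x)^2}$ together with the large-order uniform asymptotics of $M_\nu$ --- oscillatory Debye form for $x>\nu$, Airy-type form across $x=\nu$, monotone form for $x<\nu$ --- to show, uniformly for $z$ in the relevant range,
\[
\theta_{\nu}(Rz)-\theta_{\nu}(z)=\Phi_{\nu}(Rz)-\Phi_{\nu}(\max(\nu,z))+O(1),
\]
because $\theta_\nu'(x)=\Phi_\nu'(x)(1+o(1))$ for $x>\nu$ while $\theta_\nu$ varies by only $O(1)$ over $(0,\nu]$.

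Setting $z=a_{\nu_l,k}$ and $\rho=a_{\nu_l,k}/\nu_l$, dividing by $\nu_l$, and using $\Phi_{\nu}(\nu s)/\nu=\psi(s)$ where $\psi(s):=\sqrt{s^2-1}-\arccos(1/s)$ for $s\geq1$ and $\psi(s):=0$ for $s<1$, the zero equation becomes $\psi(R\rho)-\psi(\rho)\,I_{\set{\rho\geq1}}=k\pi/\nu_l+O(\nu_l^{-1})$. The same estimates bound $a_{\nu_l,k}/\nu_l$ away from $0$ and $\infty$; since $\nu_l\sim l$ and $l/k\to w$, letting $l,k\to\infty$ shows that every limit point $\rho$ satisfies
\[
\psi(R\rho)-\psi(\rho)\,I_{\set{\rho\geq1}}=\frac{\pi}{w}.
\]
Its left side is strictly increasing in $\rho$ --- the derivative is $R\psi'(R\rho)-\psi'(\rho)I_{\set{\rho\geq1}}=R\sqrt{1-(R\rho)^{-2}}-\sqrt{1-\rho^{-2}}\,I_{\set{\rho\geq1}}>0$ since $R>1$ --- so this relation has a unique root $\rho(w)$, and therefore $a_{\nu_l,k}/\nu_l\to\rho(w)$.

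Next I would identify $g_R(w)$ with $w\rho(w)$. Multiplying the last relation by $w$ turns it into the algebraic identity $\sqrt{R^2g^2-w^2}-\sqrt{g^2-w^2}\,I_{\set{|w|\leq|g|}}-w\arccos(\tfrac{w}{Rg})+w\arccos(\tfrac{w}{g})I_{\set{|w|\leq|g|}}=\pi$ for $g:=w\rho(w)$. Differentiating this in $w$ --- licit because the relation is $C^1$ (even across $\rho=1$, since $\psi'(1^+)=0$), so the implicit function theorem gives $\rho\in C^1$ --- and using the identity itself to cancel the constant $\pi$ produces precisely the right-hand side of \eqref{initial}; this right-hand side is continuous across the critical value ($\rho=1$, i.e.\ $|w|=|g|$) because both $\arccos(w/g)$ and $\sqrt{1-(w/g)^2}$ vanish there. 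The initial value comes from the $w\to0$ expansion: then $\rho\to\infty$ and $\psi(R\rho)-\psi(\rho)=(R-1)\rho+O(\rho^{-1})$, so $(R-1)\rho\sim\pi/w$ and $g=w\rho\to\pi/(R-1)$. Since the right-hand side of \eqref{initial} is continuous and locally Lipschitz in $y$ near $(0,\pi/(R-1))$ (its denominator equals $R-1\neq0$ there), Picard--Lindel\"{o}f yields a unique solution of the initial value problem, which therefore equals $w\rho(w)$; hence $a_{\nu_l,k}/\nu_l\to g_R(w)/w$.

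Finally, $g_R'$ is the ratio in \eqref{initial} of two quantities just shown to be positive, so $g_R$ is strictly increasing; and since the left side of the algebraic relation is strictly increasing in $\rho$ while $\pi/w$ is strictly decreasing, $\rho(w)$ is strictly decreasing, so $w/g_R(w)=1/\rho(w)$ is strictly increasing. The limits follow: as $w\to0$, $\rho\to\infty$ and $w/g_R\to0$; as $w\to\infty$, $\pi/w\to0$ forces $\psi(R\rho)\to0$ (with $\rho<1$ eventually, so the indicator term drops), hence $R\rho\to1$ and $w/g_R=1/\rho\to R$. The principal obstacle is the first step --- the uniform bound $\theta_\nu(Rz)-\theta_\nu(z)=\Phi_\nu(Rz)-\Phi_\nu(\max(\nu,z))+O(1)$ --- and specifically showing that the contributions from the small-argument region $z<\nu/R$ and from the two turning-point layers $z\approx\nu/R$ and $z\approx\nu$ are only $O(1)$, so that they vanish after division by $\nu_l$. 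This is the genuinely new ingredient compared with the ball case (Theorem \ref{linfty}), where the single turning point of $J_{\nu_l}$ lies strictly below the first zero $j_{\nu_l,k}$ for every $w>0$; here the inner turning point $z\approx\nu_l$ is crossed by the oscillation interval exactly in the supercritical regime $w>s(R)$ (where $\rho<1$) and sits at the inner endpoint when $w=s(R)$, and these two sub-cases --- handled via the monotone and the Airy-type uniform expansions, respectively --- carry most of the technical work.
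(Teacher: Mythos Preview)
Your proposal is correct and takes a genuinely different route from the paper. The paper's proof is almost entirely a citation: it invokes \cite[Theorem~1.1]{bobkov2019asymptotic} for the convergence $a_{\nu_l,k}/k\to g_R(w)$ and \cite[Proposition~3.1]{bobkov2019asymptotic} for the strict monotonicity of $g_R$ together with the two-sided bound $w/R<g_R(w)<f_R(w)/R$; it then uses the elementary inequality $\arccos x<\sqrt{1-x^2}/x$ to show $g_R'(w)<g_R(w)/w$, hence $(w/g_R)'>0$, and obtains the limit at infinity by running the same argument on the auxiliary function $f_R$ and applying L'H\^opital. No explicit transcendental equation for $\rho$ ever appears.

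Your approach, by contrast, rebuilds the asymptotics from the modulus--phase representation and produces the explicit relation $\psi(R\rho)-\psi(\rho)I_{\{\rho\ge1\}}=\pi/w$, from which both the ODE \eqref{initial} (by implicit differentiation) and all the monotonicity/limit statements drop out directly and transparently. This is more self-contained and more informative---the algebraic relation makes the structure (and in particular the role of the threshold $\rho=1$, i.e.\ $w=s(R)$) visible in a way the ODE alone does not---but the price is that you must actually prove the uniform phase estimate $\theta_\nu(Rz)-\theta_\nu(z)=\Phi_\nu(Rz)-\Phi_\nu(\max(\nu,z))+O(1)$, including the turning-point contributions, which the paper avoids entirely by citing Bobkov. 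Your identification of that step as the principal obstacle is accurate. One minor caution: the Picard--Lindel\"of argument for uniqueness of the IVP is fine near the initial point $(0,\pi/(R-1))$, but the right-hand side of \eqref{initial} is only H\"older-$\tfrac12$ in $y$ along the curve $y=w$, so global uniqueness of the ODE solution is better argued (as you effectively do) through the strict monotonicity of the algebraic relation rather than through Lipschitz regularity of the vector field.
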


\begin{proof}
By \cite[Theorem 1.1]{bobkov2019asymptotic}, as $l,k\rightarrow\infty$ while keeping $l/k\rightarrow w>0$, we have
\begin{equation*}
\frac{a_{\nu_l,k}}{k} \rightarrow g_R(w),
\end{equation*}
where $g_R(w)$ is the unique solution of the initial value problem \eqref{initial}. This shows that
\begin{equation*}
\frac{a_{\nu_l,k}}{\nu_l} = \frac{a_{\nu_l,k}}{k}\frac{k}{\nu_l}\rightarrow \frac{g_R(w)}{w}.
\end{equation*}
Moreover, it follows from \cite[Proposition 3.1]{bobkov2019asymptotic} that $g_R(w)$ is a strictly increasing function with
\begin{equation}\label{twosides}
\frac{w}{R} < g_R(w) < \frac{f_R(w)}{R},
\end{equation}
where $f_R(w)$ is the unique solution to the initial value problem
\begin{equation}\label{initial2}
\frac{dy}{dw} = \frac{\arccos\left(\frac{w}{y}\right)}{\sqrt{1-\left(\frac{w}{y}\right)^2}},\;\;\;
y(0) = \frac{\pi R}{R-1}.
\end{equation}
Since
\begin{equation}\label{basic}
\arccos x = \arctan\frac{\sqrt{1-x^2}}{x} < \frac{\sqrt{1-x^2}}{x},\;\;\;0<x<1,
\end{equation}
it follows from \cite[Equation (3.1)]{bobkov2019asymptotic} that
\begin{equation*}
g_R'(w) \leq \frac{\arccos\left(\frac{w}{Rg_R(w)}\right)}{R\sqrt{1-\left(\frac{w}{Rg_R(w)}\right)^2}}
< \frac{g_R(w)}{w}.
\end{equation*}
Thus, we have
\begin{equation*}
\left(\frac{w}{g_R(w)}\right)' = \frac{g_R(w)-wg_R'(w)}{g_R(w)^2} > 0,
\end{equation*}
which shows that $w/g_R(w)$ is a strictly increasing function. Moreover, it is obvious that $f_R(w)>w$ for each $w>0$. It thus follows from \eqref{basic} that
\begin{equation}\label{auxilary}
f_R'(w) = \frac{\arccos\left(\frac{w}{f_R(w)}\right)}{\sqrt{1-\left(\frac{w}{f_R(w)}\right)^2}}
< \frac{f_R(w)}{w}.
\end{equation}
Similarly, we can prove that $w/f_R(w)$ is a strictly increasing function. By L'Hospital's rule, we have
\begin{equation*}
\lim_{w\rightarrow\infty}\frac{f_R(w)}{w} = \lim_{w\rightarrow\infty}f_R'(w) \triangleq \eta.
\end{equation*}
Taking $w\rightarrow\infty$ in \eqref{auxilary} yields
\begin{equation*}
\eta\sqrt{1-\left(1/\eta\right)^2} = \arccos\left(1/\eta\right),
\end{equation*}
whose unique solution is given by
\begin{equation}\label{limitfR}
\lim_{w\rightarrow\infty}\frac{f_R(w)}{w} = \eta = 1.
\end{equation}
This fact, together with \eqref{twosides}, gives the desired limits.
\end{proof}

The graphs of the function $w/g_R(w)$ under different choices of the outer radius $R>1$ are depicted in Fig. \ref{critical}(a). Since it is a strictly increasing function ranging from $0$ to $R$, there must exist a unique critical value $s(R)>0$ such that
\begin{equation*}
g_R(s(R)) = s(R).
\end{equation*}
This critical value will play a crucial role in the localization behavior for Schr\"{o}dinger eigenfunctions in spherical shells. The monotonic dependence of the critical value $s(R)$ on the outer radius $R$ is described in the following proposition and depicted in Fig. \ref{critical}(b).

\begin{figure}
\centerline{\includegraphics[width=1.0\textwidth]{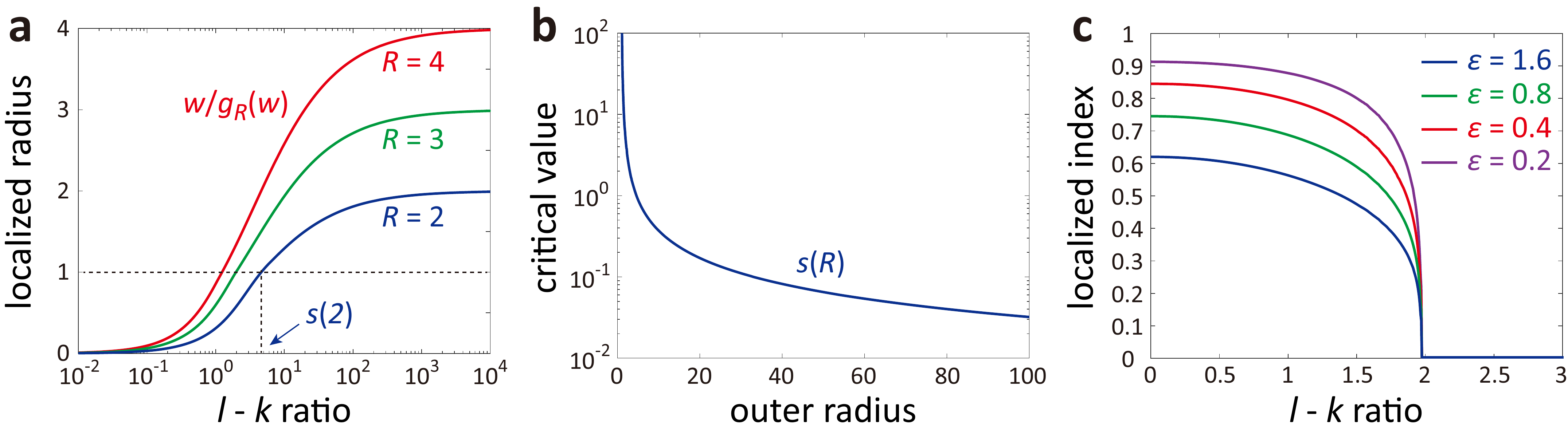}}
\caption{\textbf{Localized radii, critical values, and localized indices for Schr\"{o}dinger eigenfunctions in spherical shells.} (a) The localized radius $w/g_R(w)$ of the eigenfunctions versus the $l$-$k$ ratio $w$ under different choices of the outer radius $R$ as $l,k\rightarrow\infty$ simultaneously. (b) The critical value $s(R)$ versus the outer radius $R$. (c) The localized index $\gamma_{R,\epsilon}(w)$ versus the $l$-$k$ ratio $w$ in the case of $R = 3$ under different choices of $\epsilon$, which characterizes the width of the neighborhood of the localized sphere.}\label{critical}
\end{figure}

\begin{proposition}
$s(R)$ is a strictly decreasing function of $R$. Moreover, $s(R)$ has the following limit behavior:
\begin{equation*}
\lim_{R\rightarrow 1+}s(R) = \infty,\;\;\;\lim_{R\rightarrow\infty}s(R) = 0.
\end{equation*}
\end{proposition}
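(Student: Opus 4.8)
The plan is to reduce the whole statement to how the right‑hand side of the initial value problem \eqref{initial} depends on $R$, and then to read off the two limits from the one‑sided estimates already in Lemma~\ref{limittwosides}. Throughout I use that $s(R)$ is the unique $w>0$ with $g_R(w)=w$ (it exists and is unique precisely because $w/g_R(w)$ is strictly increasing from $0$ to $R$ and $R>1$), and that on $[0,s(R)]$ one has $g_R(w)\ge w$, so the indicator in \eqref{initial} is identically $1$ there and $g_R$ solves
\[
y'=F(w,y,R):=\frac{\arccos\!\big(\tfrac{w}{Ry}\big)-\arccos\!\big(\tfrac{w}{y}\big)}{R\sqrt{1-\big(\tfrac{w}{Ry}\big)^{2}}-\sqrt{1-\big(\tfrac{w}{y}\big)^{2}}},\qquad g_R(0)=\frac{\pi}{R-1},
\]
with $F(w,y,R)>0$ whenever $0<w\le y$ and $R>1$.

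The key claim, and what I expect to be the main obstacle, is the monotonicity lemma: for every fixed $0<w\le y$, the map $R\mapsto F(w,y,R)$ is strictly decreasing on $(1,\infty)$. To prove it I would set $a=\tfrac{w}{y}\in(0,1]$ and $b=\tfrac{w}{Ry}=\tfrac{a}{R}\in(0,a)$, so that increasing $R$ is the same as decreasing $b$, and rewrite
\[
F=\frac{b\,\big(\arccos b-\arccos a\big)}{a\sqrt{1-b^{2}}-b\sqrt{1-a^{2}}}=:\frac{N(b)}{D(b)},\qquad N,D>0 .
\]
It then suffices to show $F$ is strictly increasing in $b$, i.e.\ $N'D-ND'>0$; a short computation collapses this to $N'D-ND'=\chi(b)/\sqrt{1-b^{2}}$ with $\chi(b)=a\big(\arccos b-\arccos a\big)+b^{2}\sqrt{1-a^{2}}-ab\sqrt{1-b^{2}}$. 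Since $\chi(a)=0$ and $\chi'(b)=2\big(b\sqrt{1-a^{2}}-a\sqrt{1-b^{2}}\big)<0$ for $0<b<a$, we get $\chi(b)>0$ on $(0,a)$, which is exactly what is needed. The delicate part is really only spotting the reduction to $\chi$ together with the vanishing $\chi(a)=0$; everything around it is soft.

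Granting the lemma, strict monotonicity of $s$ follows by an ODE comparison. Fix $R_{1}<R_{2}$, so $g_{R_{1}}(0)=\tfrac{\pi}{R_{1}-1}>\tfrac{\pi}{R_{2}-1}=g_{R_{2}}(0)$, and suppose for contradiction that $s(R_{1})\le s(R_{2})$. On $[0,s(R_{1})]\subseteq[0,s(R_{2})]$ the indicator is $1$ for both solutions, so $g_{R_{1}}$ and $g_{R_{2}}$ solve $y'=F(w,y,R_{1})$ and $y'=F(w,y,R_{2})$ respectively; at a hypothetical first contact point $w_{0}\in(0,s(R_{1})]$ with $g_{R_{1}}(w_{0})=g_{R_{2}}(w_{0})=:y_{0}$ we would get $(g_{R_{1}}-g_{R_{2}})'(w_{0})=F(w_{0},y_{0},R_{1})-F(w_{0},y_{0},R_{2})>0$ by the lemma, while $g_{R_{1}}-g_{R_{2}}>0$ on $[0,w_{0})$ and $=0$ at $w_{0}$ forces $(g_{R_{1}}-g_{R_{2}})'(w_{0})\le 0$ — a contradiction. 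Hence $g_{R_{1}}>g_{R_{2}}$ on all of $[0,s(R_{1})]$, so $s(R_{1})=g_{R_{1}}(s(R_{1}))>g_{R_{2}}(s(R_{1}))\ge s(R_{1})$, where the last step uses $s(R_{1})\le s(R_{2})$; this is absurd, so $s(R_{1})>s(R_{2})$.

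For the two limits I would use Lemma~\ref{limittwosides} directly. As $R\to1^{+}$, monotonicity of $g_R$ in $w$ gives $s(R)=g_R(s(R))\ge g_R(0)=\tfrac{\pi}{R-1}\to\infty$. As $R\to\infty$, combine $g_R(w)<f_R(w)/R$ from \eqref{twosides} with the bound $f_R'(w)=\arccos(w/f_R(w))/\sqrt{1-(w/f_R(w))^{2}}\le\tfrac{\pi}{2}$ (the function $x\mapsto\arccos x/\sqrt{1-x^{2}}$ is decreasing on $[0,1)$ with value $\tfrac{\pi}{2}$ at $0$, by \eqref{basic}), so that $f_R(w)\le f_R(0)+\tfrac{\pi}{2}w=\tfrac{\pi R}{R-1}+\tfrac{\pi}{2}w$ by \eqref{initial2}; putting $w=s(R)$ into $s(R)<f_R(s(R))/R$ yields $s(R)\big(1-\tfrac{\pi}{2R}\big)<\tfrac{\pi}{R-1}$, hence $s(R)\le\tfrac{2\pi R}{(R-1)(2R-\pi)}\to0$. (Alternatively one can identify $s(R)$ explicitly: one checks that the implicit relation $\big[\sqrt{(RP)^{2}-1}-\arccos\tfrac{1}{RP}\big]-\big[\sqrt{P^{2}-1}-\arccos\tfrac{1}{P}\big]=\tfrac{\pi}{w}$ with $P=g_R(w)/w$ solves the $F$‑equation above with the correct initial data on $[0,s(R)]$, so by uniqueness it is $g_R$; setting $P=1$ then gives $s(R)=\pi/\big(\sqrt{R^{2}-1}-\arccos\tfrac{1}{R}\big)$, from which the whole proposition is immediate because $R\mapsto\sqrt{R^{2}-1}-\arccos\tfrac{1}{R}$ has derivative $\sqrt{R^{2}-1}/R>0$ and runs from $0$ to $\infty$.)
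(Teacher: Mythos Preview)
Your argument is correct and considerably more self-contained than the paper's. The paper dispatches monotonicity in one line (``it is not hard to prove that $g_R(w)$ is strictly decreasing with respect to $R$'') and then quotes from \cite{bobkov2019asymptotic} the two-sided bound $\tfrac{\pi}{R-1}<g_R(w)<\tfrac{\pi}{R-1}+\tfrac{\pi w}{2R}$, reading off both limits directly. You instead supply an actual proof of the monotonicity by showing the vector field $F(w,y,R)$ is strictly decreasing in $R$ and running a first-contact ODE comparison; your computation reducing $N'D-ND'$ to $\chi(b)/\sqrt{1-b^2}$ is correct, and $\chi'(b)=2(b\sqrt{1-a^2}-a\sqrt{1-b^2})<0$ with $\chi(a)=0$ finishes it cleanly. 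For the limits you then \emph{derive} the same upper bound on $g_R$ from $g_R<f_R/R$ together with $f_R'\le\pi/2$ (this last follows because $x\mapsto\arccos x/\sqrt{1-x^2}$ is decreasing on $[0,1)$ with value $\pi/2$ at $0$ --- not quite from \eqref{basic} as you cite, but it is elementary), so the resulting inequality $s(R)<\tfrac{2\pi R}{(R-1)(2R-\pi)}$ coincides with the paper's. Your parenthetical alternative is the real bonus: the closed form $s(R)=\pi/(\sqrt{R^2-1}-\arccos(1/R))$ says precisely that $h(s(R))=R$ in the notation of \eqref{inverse}, and once verified it yields the entire proposition in a single stroke; the paper does not observe this identification.
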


\begin{proof}
By the definition of $g_R(w)$, it is not hard to prove that $g_R(w)$ is strictly decreasing with respect to $R$ for each $w>0$. This shows that $s(R)$ is a strictly decreasing function of $R$. Moreover, it follows from \cite[Proposition 3.1]{bobkov2019asymptotic} that
\begin{equation}\label{gRinequality}
\frac{\pi}{R-1} < g_R(w) < \frac{\pi}{R-1}+\frac{\pi w}{2R}.
\end{equation}
This shows that
\begin{equation*}
s(R) > \frac{\pi}{R-1},
\end{equation*}
which gives the first limit. On the other hand, when $R>\pi/2$, it follows from \eqref{gRinequality} that
\begin{equation*}
s(R) < \frac{2\pi R}{(R-1)(2R-\pi)},
\end{equation*}
which gives the second limit.
\end{proof}

To investigate the localization behavior for Schr\"{o}dinger eigenfunctions in spherical shells, we need the Paris-type inequalities for cylinder functions. In fact, the lower bound of the Paris-type inequality for cylinder functions has been discussed in \cite{laforgia1986inequalities}. The following lemma gives an upper bound.

\begin{lemma}\label{paris}
Suppose that $l/k\rightarrow w>s(R)$ as $l,k\rightarrow\infty$. When $l$ is sufficiently large, we have
\begin{equation}\label{parisinequality}
0 \leq \frac{F_{\nu_l,k}(\nu_lz)}{z^{\nu_l}F_{\nu_l,k}(\nu_l)}
\leq e^{\frac{\nu_l^2(1-z^2)}{2(\nu_l+1)}} \leq e^{\nu_l(1-z)},\;\;\;\frac{a_{\nu_l,k}}{\nu_l}\leq z\leq 1.
\end{equation}
\end{lemma}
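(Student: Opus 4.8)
The plan is to prove the upper bound in \eqref{parisinequality} by a differential-inequality argument applied to the function
\begin{equation*}
\phi(z) = \frac{F_{\nu_l,k}(\nu_l z)}{z^{\nu_l}F_{\nu_l,k}(\nu_l)},
\end{equation*}
exactly mirroring the classical route to the Paris inequality \eqref{parisbessel} for $J_{\nu}$. First I would record that on the interval $z\in[a_{\nu_l,k}/\nu_l,1]$ the cylinder function $F_{\nu_l,k}(\nu_l z)$ is one-signed: by Lemma \ref{limittwosides}, the hypothesis $w>s(R)$ forces $a_{\nu_l,k}/\nu_l = (a_{\nu_l,k}/k)(k/\nu_l) \to g_R(w)/w$ with $g_R(w)/w \in (1/R,1)$ strictly between $1/R$ and $1$ (the condition $w>s(R)$ is precisely $g_R(w)<w$), so for large $l$ the left endpoint $a_{\nu_l,k}/\nu_l$ exceeds $1/R$ and is strictly below $1$, and $\nu_l z$ stays between the first shell-zero $a_{\nu_l,k}$ and $a_{\nu_l,k}R$; hence no zero of $F_{\nu_l,k}$ lies in the open interval and $F_{\nu_l,k}(\nu_l z)$ keeps the sign of $F_{\nu_l,k}(\nu_l)$ there. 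Normalising that sign to be positive gives $\phi\ge 0$ and $\phi(1)=1$, which anchors the comparison.

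The engine is the Bessel differential equation itself. Writing $\psi(z)=F_{\nu_l,k}(\nu_l z)$, one has $z^2\psi''+z\psi'+(\nu_l^2 z^2-\nu_l^2)\psi=0$. Substituting $\psi(z)=z^{\nu_l}\phi(z)$ and simplifying, $\phi$ satisfies a second-order ODE of the form
\begin{equation*}
z\phi'' + (2\nu_l+1)\phi' + \nu_l^2 z\,\phi = 0.
\end{equation*}
I would then consider the logarithmic derivative, or more efficiently multiply through by an integrating factor $z^{2\nu_l+1}$ to get $\bigl(z^{2\nu_l+1}\phi'\bigr)' = -\nu_l^2 z^{2\nu_l+1}\phi \le 0$ on the interval (using $\phi\ge 0$), so $z^{2\nu_l+1}\phi'$ is nonincreasing. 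Evaluating the monotonicity against the endpoint behaviour at the zero $z_0=a_{\nu_l,k}/\nu_l$, where $\psi(z_0)=0$ forces $\phi(z_0)=0$ and $\phi'(z_0)\ge 0$, one deduces $\phi'(z)\ge 0$ throughout, i.e.\ $\phi$ is nondecreasing, and in particular $\phi(z)\le \phi(1)=1$ — but that is only the trivial bound. To extract the sharper exponential bound $\phi(z)\le e^{\nu_l^2(1-z^2)/(2(\nu_l+1))}$ I would instead track $\rho(z) := \phi'(z)/\phi(z) = (\log\phi)'$: from the ODE, $\rho$ satisfies the Riccati equation $z\rho' + z\rho^2 + (2\nu_l+1)\rho + \nu_l^2 z = 0$, and comparing with the explicit supersolution coming from the conjectured bound — whose logarithmic derivative is $-\nu_l^2 z/(\nu_l+1)$ — reduces the claim to the elementary inequality $\phi'(z)/\phi(z) \le -\nu_l^2 z/(\nu_l+1)$ on $[z_0,1]$, which can be checked by verifying that the right-hand side is itself a supersolution of the Riccati equation (the quadratic term $z\rho^2\ge 0$ provides the needed slack) and invoking a comparison principle anchored at $z=1$. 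The final chain $e^{\nu_l^2(1-z^2)/(2(\nu_l+1))}\le e^{\nu_l(1-z)}$ is immediate since $\tfrac{\nu_l(1+z)}{2(\nu_l+1)}\le 1$ for $z\le 1$.

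The main obstacle I anticipate is controlling the sign and the endpoint data at $z_0=a_{\nu_l,k}/\nu_l$ uniformly in $l$: one needs that for $l$ large the relevant interval genuinely lies strictly between consecutive zeros of $F_{\nu_l,k}$, which is where the hypothesis $w>s(R)$ is essential (in the subcritical regime $z_0<1$ fails or additional zeros intrude, and indeed the paper later shows localization breaks there), and one needs the correct sign of $\phi'(z_0^+)$, which should follow from the simplicity of the zero $a_{\nu_l,k}$ of the cross-product $f_{\nu_l,R}$ together with the Wronskian relation $J_\nu(z)Y_\nu'(z)-J_\nu'(z)Y_\nu(z)=2/(\pi z)$. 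A secondary technical point is that, unlike $J_\nu$, the cylinder function $F_{\nu_l,k}$ has no canonical normalisation, so one must carry the ratio $F_{\nu_l,k}(\nu_l z)/F_{\nu_l,k}(\nu_l)$ throughout and verify $F_{\nu_l,k}(\nu_l)\neq 0$ for large $l$, which again follows from $g_R(w)/w<1$ placing $z=1$ strictly inside the one-signed interval. Everything else is the routine Riccati/comparison bookkeeping already standard in the proof of the classical Paris inequality.
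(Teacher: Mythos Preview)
Your route is genuinely different from the paper's, and your positivity step has a real gap.

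The paper does not run a Riccati argument on $F_{\nu_l,k}$ at all. It decomposes $F_{\nu_l,k}(\nu_l z)=J_{\nu_l}(a_{\nu_l,k})Y_{\nu_l}(\nu_l z)-Y_{\nu_l}(a_{\nu_l,k})J_{\nu_l}(\nu_l z)$, uses $a_{\nu_l,k}<\nu_l$ (from $w>s(R)$) to make both coefficients $J_{\nu_l}(a_{\nu_l,k})$ and $-Y_{\nu_l}(a_{\nu_l,k})$ positive, and then simply quotes Laforgia's separate bounds $J_\nu(\nu z)/(z^\nu J_\nu(\nu))\le e^{\nu^2(1-z^2)/(2(\nu+1))}$ and $Y_\nu(\nu z)/(z^\nu Y_\nu(\nu))\ge e^{\nu^2(1-z^2)/(2(\nu+1))}$; a positive linear combination gives the upper bound in one line. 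For the lower bound $F_{\nu_l,k}(\nu_l z)>0$ the paper rewrites $F_{\nu_l,k}(\nu_l z)=f_{\nu_l,R(z)}(a_{\nu_l,k})$ with $R(z)=\nu_l z/a_{\nu_l,k}>1$ and invokes the known estimate that the first zero of this cross-product exceeds $\nu_l/R(z)\ge a_{\nu_l,k}$. So the ODE work is entirely outsourced to cited inequalities for $J_\nu$ and $Y_\nu$ separately.

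Your gap: you claim that since $\nu_l z$ lies between $a_{\nu_l,k}$ and $a_{\nu_l,k}R$, ``no zero of $F_{\nu_l,k}$ lies in the open interval.'' This is false for every $k>1$: by Sturm--Liouville oscillation the $k$-th radial eigenfunction $r\mapsto F_{\nu_l,k}(a_{\nu_l,k}r)$ has exactly $k-1$ interior zeros in $(1,R)$, so $F_{\nu_l,k}$ has $k-1$ zeros in $(a_{\nu_l,k},a_{\nu_l,k}R)$. What you actually need is that $F_{\nu_l,k}$ is zero-free on the smaller interval $(a_{\nu_l,k},\nu_l]$, and that requires a genuine argument --- either the paper's cross-product trick, or Sturm separation with $J_{\nu_l}$ (between any two zeros of $F_{\nu_l,k}$ there is a zero of $J_{\nu_l}$, and $j_{\nu_l,1}>\nu_l>a_{\nu_l,k}$ forces the next zero of $F_{\nu_l,k}$ past $\nu_l$). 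A secondary slip: the inequality you need is $\phi'/\phi\ge -\nu_l^2 z/(\nu_l+1)$, not $\le$ (integrate from $z$ to $1$); correspondingly $\sigma(z)=-\nu_l^2 z/(\nu_l+1)$ is a strict \emph{sub}solution of the Riccati equation (one computes $z\sigma'+z\sigma^2+(2\nu_l+1)\sigma+\nu_l^2 z=\nu_l^2 z\bigl[\nu_l^2 z^2/(\nu_l+1)^2-1\bigr]<0$), and the comparison must be anchored at $z_0^+$ where $\rho\to+\infty$, not at $z=1$ where you have no control on $\rho(1)$. With both fixes your argument does go through, but the paper's linear-combination approach is considerably shorter.
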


\begin{proof}
Since $z/g_R(z)$ is a strictly increasing function, as $l,k\rightarrow\infty$ while keeping $l/k\rightarrow w>s(R)$, we have
\begin{equation*}
\frac{a_{\nu_l,k}}{\nu_l} \rightarrow \frac{g_R(w)}{w} < \frac{g_R(s(R))}{s(R)} = 1.
\end{equation*}
When $l$ is sufficiently large, we have $a_{\nu_l,k}<\nu_l$ and thus
\begin{equation*}
Y_{\nu_l}(a_{\nu_l,k}) < 0 < J_{\nu_l}(a_{\nu_l,k}),\;\;\;Y_{\nu_l}(\nu_l) < 0 < J_{\nu_l}(\nu_l).
\end{equation*}
It follows from \cite[Page 80]{laforgia1986inequalities} that
\begin{equation*}
\frac{J_{\nu}(\nu z)}{z^\nu J_{\nu}(\nu)} \leq e^{\frac{\nu^2(1-z^2)}{2(\nu+1)}},\;\;\;\nu>0,\;0<z\leq 1.
\end{equation*}
and it follows from \cite[Page 79]{laforgia1986inequalities} that
\begin{equation*}
\frac{Y_{\nu}(\nu z)}{z^\nu Y_{\nu}(\nu)} \geq e^{\frac{\nu^2(1-z^2)}{2(\nu+1)}},\;\;\;\nu>0,\;0<z\leq 1.
\end{equation*}
Combining the above two equations, we obtain that
\begin{equation*}
J_{\nu_l}(a_{\nu_l,k})Y_{\nu_l}(\nu_lz)-Y_{\nu_l}(a_{\nu_l,k})J_{\nu_l}(\nu_lz)
\leq e^{\frac{\nu_l^2(1-z^2)}{2(\nu_l+1)}}z^{\nu_l}
[J_{\nu_l}(a_{\nu_l,k})Y_{\nu_l}(\nu_l)-Y_{\nu_l}(a_{\nu_l,k})J_{\nu_l}(\nu_l)].
\end{equation*}
For any $a_{\nu_l,k}/\nu_l < z\leq 1$, we have
\begin{equation*}
J_{\nu_l}(a_{\nu_l,k})Y_{\nu_l}(\nu_lz)-Y_{\nu_l}(a_{\nu_l,k})J_{\nu_l}(\nu_lz) = f_{\nu_l,R(z)}(a_{\nu_l,k}),
\end{equation*}
where $R(z) = \nu_lz/a_{\nu_l,k}>1$ and $f_{\nu_l,R(z)}$ is the cross product defined in \eqref{crossproduct}. Let $b_{\nu_l,1}$ be the first zero of the cross product $f_{\nu_l,R(z)}$. It then follows from \cite[Equation 1.1]{bobkov2019asymptotic} that
\begin{equation*}
b_{\nu_l,1} > \frac{\nu_l}{R(z)} = \frac{a_{\nu_l,k}}{z} \geq a_{\nu_l,k}.
\end{equation*}
This shows that $f_{\nu_l,R(z)}(a_{\nu_l,k}) > 0$ for any $a_{\nu_l,k}/\nu_l < z\leq 1$ and thus
\begin{equation*}
0 < \frac{F_{\nu_l,k}(\nu_lz)}{z^{\nu_l}F_{\nu_l,k}(\nu_l)}
= \frac{J_{\nu_l}(a_{\nu,k})Y_{\nu_l}(\nu_lz)-Y_{\nu_l}(a_{\nu_l,k})J_{\nu_l}(\nu_lz)}
{z^{\nu_l}[J_{\nu_l}(a_{\nu_l,k})Y_{\nu_l}(\nu_l)-Y_{\nu_l}(a_{\nu_l,k})J_{\nu_l}(\nu_l)]}
\leq e^{\frac{\nu_l^2(1-x^2)}{2(\nu_l+1)}},
\end{equation*}
which completes the proof.
\end{proof}

The two-parameter $L^\infty$-localization for Schr\"{o}dinger eigenfunctions in spherical shells is stated as follows.

\begin{theorem}\label{linfty2}
For any $r>0$ and $\epsilon>0$, let $D(r,\epsilon) = \{x\in\Rnum^d: ||x|-r|\geq\epsilon\}$. Then for any $\epsilon>0$, as $l,k\rightarrow\infty$ while keeping $l/k\rightarrow w>s(R)$, we have
\begin{equation*}
\frac{\|u_{klm}\|_{L^\infty(D(w/g_R(w),\epsilon))}}{\|u_{klm}\|_{L^\infty(\Omega_{\textrm{shell}})}} \rightarrow 0,
\end{equation*}
where $g_R(w)$ is defined in \eqref{initial}.
\end{theorem}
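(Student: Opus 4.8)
The plan is to run the two-region argument of Theorem~\ref{linfty}, with the Bessel function $J_{\nu_l}$ replaced by the cylinder function $F_{\nu_l,k}$ and with $F_{\nu_l,k}(\nu_l)$ serving as the reference value. First I would rescale: the radial part of $u_{klm}$ is $r^{1-d/2}F_{\nu_l,k}(a_{\nu_l,k}r)$, so the substitution $z = a_{\nu_l,k}r/\nu_l$ turns the shell $1<r<R$ into the $z$-interval $[a_{\nu_l,k}/\nu_l,\,Ra_{\nu_l,k}/\nu_l]$. By Lemma~\ref{limittwosides}, and because $w>s(R)$ forces $w/g_R(w)>1$ and hence $a_{\nu_l,k}/\nu_l\to g_R(w)/w<1$ while $Ra_{\nu_l,k}/\nu_l\to Rg_R(w)/w>1$, this interval straddles the turning point $z=1$, with the localized radius $r = w/g_R(w)$ corresponding exactly to $z\to1$. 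Assuming, as we may, that $\epsilon$ is small enough that $1<w/g_R(w)-\epsilon$ and $w/g_R(w)+\epsilon<R$ (otherwise the corresponding piece is empty), I would decompose $D(w/g_R(w),\epsilon)\cap\Omega_{\textrm{shell}}$ into the inner shell $E_1 = \{x\in\Rnum^d: 1<|x|\le w/g_R(w)-\epsilon\}$, on which $z\le 1-\epsilon'$ for a fixed $\epsilon'>0$ once $l$ is large, and the outer shell $E_2 = \{x\in\Rnum^d: w/g_R(w)+\epsilon\le|x|<R\}$, on which $z$ stays in a fixed compact subinterval of $(1,\infty)$ bounded away from $1$ (here it is essential that $w/g_R(w)<R$). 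Since the angular factor $Y_{lm}$ cancels in every ratio, only the radial part matters, exactly as in Theorem~\ref{linfty}.

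For the denominator I would avoid estimating $\|u_{klm}\|_{L^\infty(\Omega_{\textrm{shell}})}$ in absolute terms and instead use $F_{\nu_l,k}(\nu_l)$ as a yardstick. The point $r_0 := \nu_l/a_{\nu_l,k}\to w/g_R(w)\in(1,R)$ lies in the shell for large $l$, so evaluating the radial part there gives $\|u_{klm}\|_{L^\infty(\Omega_{\textrm{shell}})}\ge r_0^{1-d/2}|F_{\nu_l,k}(\nu_l)| = r_0^{1-d/2}F_{\nu_l,k}(\nu_l)\gtrsim F_{\nu_l,k}(\nu_l)$, where $F_{\nu_l,k}(\nu_l)>0$ for large $l$ (this positivity is precisely what is established in the course of the proof of Lemma~\ref{paris}, by taking $z=1$ there) and $r_0^{1-d/2}$ converges to a positive constant.

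On $E_1$ the estimate is essentially a transcription of the ball case. The Paris-type inequality~\eqref{parisinequality} of Lemma~\ref{paris} gives $F_{\nu_l,k}(\nu_l z)\le (ze^{1-z})^{\nu_l}F_{\nu_l,k}(\nu_l)$ for $a_{\nu_l,k}/\nu_l\le z\le 1$; since $ze^{1-z}$ is increasing on $[0,1]$ and $z\le 1-\epsilon'$ on $E_1$, this is at most $q^{\nu_l}F_{\nu_l,k}(\nu_l)$ with $q = (1-\epsilon')e^{\epsilon'}<1$. Combining with the previous paragraph and using that $z^{1-d/2}$ stays bounded on the $z$-range of $E_1$ (it is bounded below by $a_{\nu_l,k}/\nu_l\to g_R(w)/w>0$), the ratio on $E_1$ is $\lesssim q^{\nu_l}\to0$: this is the exponential decay inside the localized sphere.

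The real work, and the step I expect to be the obstacle, is the outer shell $E_2$, where the Paris-type bound is useless and $F_{\nu_l,k}$ oscillates. There I would write $F_{\nu_l,k}(\nu_l z) = -Y_{\nu_l}(a_{\nu_l,k})\bigl[J_{\nu_l}(\nu_l z) - \tfrac{J_{\nu_l}(a_{\nu_l,k})}{Y_{\nu_l}(a_{\nu_l,k})}Y_{\nu_l}(\nu_l z)\bigr]$ and insert Debye's expansion~\cite[Equation~10.19.6]{special} for \emph{both} $J_{\nu_l}(\nu_l z)$ and $Y_{\nu_l}(\nu_l z)$ in the regime $z>1$, each of which is $O(\nu_l^{-1/2})$ uniformly on the compact $z$-range of $E_2$, together with the fact that $J_{\nu_l}(a_{\nu_l,k})/|Y_{\nu_l}(a_{\nu_l,k})|\to0$ exponentially fast (because $a_{\nu_l,k}/\nu_l$ stays bounded below $1$). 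This gives $|F_{\nu_l,k}(\nu_l z)|\lesssim |Y_{\nu_l}(a_{\nu_l,k})|\,\nu_l^{-1/2}$ on $E_2$. The same algebra at $z=1$ gives $F_{\nu_l,k}(\nu_l)\sim -Y_{\nu_l}(a_{\nu_l,k})J_{\nu_l}(\nu_l)$, and the Cauchy value $J_{\nu_l}(\nu_l)\sim 2^{1/3}/(3^{2/3}\Gamma(2/3)\nu_l^{1/3})$ from~\eqref{temp2} makes this $\gtrsim |Y_{\nu_l}(a_{\nu_l,k})|\,\nu_l^{-1/3}$. The exponentially large factor $|Y_{\nu_l}(a_{\nu_l,k})|$ therefore cancels between numerator and denominator, and the ratio on $E_2$ is $\lesssim \nu_l^{-1/6}\to0$: the polynomial decay outside the localized sphere. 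The two points needing care are the uniformity of the Debye expansion on $E_2$ — secured because its $z$-range is a fixed compact subset of $(1,\infty)$ — and a clean quantitative version of the ``exponentially small/large'' statements about $J_{\nu_l}(a_{\nu_l,k})$ and $Y_{\nu_l}(a_{\nu_l,k})$, both of which follow from the uniform below-turning-point asymptotics with $a_{\nu_l,k}/\nu_l$ bounded away from $1$. Combining the $E_1$ and $E_2$ bounds yields the theorem.
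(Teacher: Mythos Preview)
Your proposal is correct and follows essentially the same route as the paper: the same inner/outer decomposition, the same lower bound on the denominator via $F_{\nu_l,k}(\nu_l)$, Lemma~\ref{paris} for the exponential decay on $E_1$, and Debye's expansions for the $\nu_l^{-1/6}$ bound on $E_2$. The only cosmetic difference is that the paper explicitly inserts Debye's below-turning-point expansion \cite[Equation~10.19.3]{special} for $J_{\nu_l}(a_{\nu_l,k})$ and $Y_{\nu_l}(a_{\nu_l,k})$ and tracks the factor $e^{(\alpha-\tanh\alpha)\nu_l}$ (with $\sech\alpha = g_R(w)/w$) through the computation, whereas you keep $|Y_{\nu_l}(a_{\nu_l,k})|$ as an unspecified exponentially large factor and observe that it cancels in the ratio; these are the same estimate in different bookkeeping.
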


\begin{proof}
Since $z/g_R(z)$ is a strictly increasing function and $w>s(R)$, we have
\begin{equation*}
1 = \frac{s(R)}{g_R(s(R))} < \frac{w}{g_R(w)} < R.
\end{equation*}
For any $r>0$ and $\epsilon>0$, let $D_1(r,\epsilon)$ and $D_2(r,\epsilon)$ be the domains defined in \eqref{shells}. Clearly, we have
\begin{equation*}
\frac{\|u_{klm}\|_{L^\infty(D_1(w/g_R(w),\epsilon))}}{\|u_{klm}\|_{L^\infty(\Omega_{\textrm{shell}})}}
= \frac{\|(a_{\nu_l,k}r)^{1-d/2}F_{\nu_l,k}(a_{\nu_l,k}r)\|_{L^\infty([1,w/g_R(w)-\epsilon])}}
{\|(a_{\nu_l,k}r)^{1-d/2}F_{\nu_l,k}(a_{\nu_l,k}r)\|_{L^\infty([1,R])}}.
\end{equation*}
When $l$ is sufficiently large, we have
\begin{equation*}
1 < \frac{\nu_l}{a_{\nu_l,k}} < R,
\end{equation*}
and thus
\begin{equation*}
\|(a_{\nu_l,k}r)^{1-d/2}F_{\nu_l,k}(a_{\nu_l,k}r)\|_{L^\infty([1,R])}
\geq \nu_l^{1-d/2}F_{\nu_l,k}(\nu_l).
\end{equation*}
Therefore, when $l$ is sufficiently large,
\begin{equation*}
\frac{\|u_{klm}\|_{L^\infty(D_1(w/g_R(w),\epsilon))}}{\|u_{klm}\|_{L^\infty(\Omega_{\textrm{shell}})}}
\leq \sup_{1\leq r\leq w/g_R(w)-\epsilon}z(r)^{1-d/2}\frac{F_{\nu_l,k}(\nu_lz(r))}{F_{\nu_l,k}(\nu_l)},
\end{equation*}
where
\begin{equation*}
\frac{a_{\nu_l,k}}{\nu_l} \leq z(r) = \frac{a_{\nu_l,k}r}{\nu_l} < 1-\frac{\epsilon}{R}.
\end{equation*}
By Lemma \ref{paris}, since $f(x) = xe^{1-x}$ is an increasing function on $[0,1]$, it is easy to see that
\begin{equation*}
\sup_{z<1-\epsilon/R}z^{1-d/2}\frac{F_{\nu_l,k}(\nu_lz)}{F_{\nu_l,k}(\nu_l)}
\leq e^{(1-z)(d/2-1)}[ze^{(1-z)}]^{\nu_l+1-d/2} \lesssim q^{\nu_l+1-d/2},
\end{equation*}
where
\begin{equation}\label{defq2}
q = (1-\epsilon/R)e^{\epsilon/R} < 1.
\end{equation}
This shows that
\begin{equation}\label{part3}
\frac{\|u_{klm}\|_{L^\infty(D_1(w/g_R(w),\epsilon))}}{\|u_{klm}\|_{L^\infty(\Omega_{\textrm{shell}})}}
\lesssim q^{\nu_l+1-d/2} \rightarrow 0.
\end{equation}
On the other hand, when $l$ is sufficiently large,
\begin{equation*}
\frac{\|u_{klm}\|_{L^\infty(D_2(w/g_R(w),\epsilon))}}{\|u_{klm}\|_{L^\infty(\Omega_{\textrm{shell}})}}
\leq \sup_{w/g_R(w)+\epsilon\leq r\leq R}z(r)^{1-d/2}
\frac{|F_{\nu_l,k}(\nu_lz(r))|}{F_{\nu_l,k}(\nu_l)},
\end{equation*}
where
\begin{equation*}
1+\frac{\epsilon}{R} < z(r) = \frac{a_{\nu_l,k}r}{\nu_l} < R.
\end{equation*}
By Debye's expansions for Bessel functions \cite[Equation 10.19.3]{special}, as $\nu\rightarrow\infty$ with $\alpha>0$ fixed, we have
\begin{equation}\label{debye}
J_{\nu}(\nu\sech\alpha) \sim \frac{e^{-(\alpha-\tanh\alpha)\nu}}{(2\pi\nu\tanh\alpha)^{1/2}},\;\;\;
Y_{\nu}(\nu\sech\alpha) \sim -\frac{e^{(\alpha-\tanh\alpha)\nu}}
{\left(\frac{1}{2}\pi\nu\tanh\alpha\right)^{1/2}}.
\end{equation}
Moreover, as $\nu\rightarrow\infty$, it follows from \cite[Equation 10.19.8]{special} that
\begin{equation*}
J_{\nu}(\nu) \sim \frac{2^{1/3}}{3^{2/3}\Gamma(2/3)\nu^{1/3}},\;\;\;
Y_{\nu}(\nu) \sim -\frac{2^{1/3}}{3^{1/6}\Gamma(2/3)\nu^{1/3}}.
\end{equation*}
This shows that
\begin{equation}\label{temp5}
F_{\nu_l,k}(\nu_l) = J_{\nu_l}(a_{\nu_l,k})Y_{\nu_l}(\nu_l)-Y_{\nu_l}(a_{\nu_l,k})J_{\nu_l}(\nu_l)
\sim Ce^{(\alpha-\tanh\alpha)\nu}\nu_l^{-5/6},
\end{equation}
where
\begin{equation*}
C = \frac{2^{1/3}}{3^{1/6}\Gamma(2/3)\left(\frac{1}{2}\pi\tanh\alpha\right)^{1/2}} > 0
\end{equation*}
and $\alpha>0$ is the unique solution to the algebraic equation
\begin{equation*}
\sech\alpha = \frac{g_R(w)}{w} \in \left(\frac{1}{R},1\right).
\end{equation*}
Imitating the proof of \eqref{temp3} and applying Debye's expansion \cite[Equation 10.19.6]{special}, we obtain that
\begin{equation}\label{temp4}
Y_{\nu}(\nu z) \sim \left(\frac{4}{\pi^2(z^2-1)}\right)^{1/4}
\left[\sin\left(\zeta(z)\nu-\frac{\pi}{4}\right)+O(\nu^{-1})\right]\nu^{-1/2},
\end{equation}
which holds uniformly for $z$ on compact intervals in $(1,\infty)$. It thus follows from \eqref{temp3} and \eqref{temp4} that
\begin{equation}\label{temp6}
F_{\nu_l,k}(\nu_lz) = J_{\nu_l}(a_{\nu_l,k})Y_{\nu_l}(\nu_lz)-Y_{\nu_l}(a_{\nu_l,k})J_{\nu_l}(\nu_lz)
= e^{(\alpha-\tanh\alpha)\nu}O(\nu_l^{-1}),
\end{equation}
which holds uniformly for $z$ on compact intervals in $(1,\infty)$. Thus, we finally obtain that
\begin{equation}\label{part4}
\frac{\|u_{klm}\|_{L^\infty(D_2(w/g_R(w),\epsilon))}}{\|u_{klm}\|_{L^\infty(\Omega_{\textrm{shell}})}}
\leq \sup_{1+\epsilon/R<z<R}z^{1-d/2}\frac{|F_{\nu_l,k}(\nu_lz)|}{F_{\nu_l,k}(\nu_l)}
\lesssim \nu_l^{-1/6} \rightarrow 0.
\end{equation}
Combining \eqref{part3} and \eqref{part4} gives the desired result.
\end{proof}

The two-parameter $L^p$-localization for Schr\"{o}dinger eigenfunctions in spherical shells is stated as follows.

\begin{theorem}\label{lp2}
The notation is the same as in Theorem \ref{linfty2}. Then for any $\epsilon>0$ and $p>4$, as $l,k\rightarrow\infty$ while keeping $l/k\rightarrow w>s(R)$, we have
\begin{equation*}
\frac{\|u_{klm}\|_{D(w/g_R(w),\epsilon)}}{\|u_{klm}\|_{\Omega_{\textrm{shell}}}} \rightarrow 0,
\end{equation*}
where $g_R(w)$ is defined in \eqref{initial}.
\end{theorem}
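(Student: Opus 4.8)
The plan is to run the same two-region argument as in the proof of Theorem~\ref{lp}, replacing $J_{\nu_l}$ by the cylinder function $F_{\nu_l,k}$, the Paris inequality \eqref{parisbessel} by the Paris-type bound of Lemma~\ref{paris}, and the turning-point lower bound \eqref{middle} together with \eqref{temp5}--\eqref{temp6} from the proof of Theorem~\ref{linfty2} to control $F_{\nu_l,k}$ near and away from its turning point. Set $\alpha=(1-d/2)p+d-1$ and, for $\epsilon>0$, let $D_1(w/g_R(w),\epsilon)$ and $D_2(w/g_R(w),\epsilon)$ be the inner and outer pieces defined in \eqref{shells}, so that $D(w/g_R(w),\epsilon)\cap\Omega_{\textrm{shell}}=D_1\cup D_2$. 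Separating the angular and radial variables (the angular factor cancels) and substituting $z=a_{\nu_l,k}r/\nu_l$ turns each ratio $\|u_{klm}\|_{L^p(D_i)}^p/\|u_{klm}\|_{L^p(\Omega_{\textrm{shell}})}^p$ into a ratio of integrals of $z^{\alpha}|F_{\nu_l,k}(\nu_l z)|^{p}$. By Lemma~\ref{limittwosides}, $a_{\nu_l,k}/\nu_l\to g_R(w)/w$, and since $w>s(R)$ and $z/g_R(z)$ is strictly increasing we have $1<w/g_R(w)<R$; hence, for $l$ large, the $z$-interval of $D_1$ is contained in a fixed $[\delta,1-\epsilon']\subset(0,1)$, the $z$-interval of $D_2$ is contained in a fixed compact $[c_1,c_2]\subset(1,\infty)$ bounded away from $1$, and the turning point $z=1$ lies strictly inside the full rescaled radial range $[a_{\nu_l,k}/\nu_l,\,R a_{\nu_l,k}/\nu_l]$.

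For the inner region, Lemma~\ref{paris} gives $0\le F_{\nu_l,k}(\nu_l z)\le z^{\nu_l}F_{\nu_l,k}(\nu_l)e^{\nu_l(1-z)}$ on $[a_{\nu_l,k}/\nu_l,1]$, which covers the $D_1$-interval; since there $z$ stays in $[\delta,1-\epsilon']$ and $ze^{1-z}$ is increasing on $[0,1]$, bounding $(ze^{1-z})^{p\nu_l}$ by $q^{p\nu_l}$ with $q=(1-\epsilon')e^{\epsilon'}<1$ exactly as in \eqref{lppart1} yields $\int z^{\alpha}F_{\nu_l,k}(\nu_l z)^{p}\,dz\lesssim F_{\nu_l,k}(\nu_l)^{p}q^{p\nu_l}$ over that interval. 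For the outer region, the uniform estimate \eqref{temp6} gives $|F_{\nu_l,k}(\nu_l z)|\lesssim e^{(\beta-\tanh\beta)\nu_l}\nu_l^{-1}$ on $[c_1,c_2]$ (here $\beta>0$ is the Debye angle with $\sech\beta=g_R(w)/w$, written $\alpha$ in \eqref{temp5}--\eqref{temp6}), so the $D_2$ numerator is $\lesssim e^{p(\beta-\tanh\beta)\nu_l}\nu_l^{-p}$.

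For the denominator common to both ratios I would integrate only over the turning-point window $z\in[1,\,1+a\nu_l^{-2/3}]$, which for $l$ large lies inside the rescaled radial range because $g_R(w)/w<1<Rg_R(w)/w$. There $J_{\nu_l}(\nu_l z)\gtrsim\nu_l^{-1/3}$ by \eqref{middle}, while the Debye expansions \eqref{debye} give $|Y_{\nu_l}(a_{\nu_l,k})|\gg|J_{\nu_l}(a_{\nu_l,k})|$ by an exponential factor; hence in $F_{\nu_l,k}(\nu_l z)=J_{\nu_l}(a_{\nu_l,k})Y_{\nu_l}(\nu_l z)-Y_{\nu_l}(a_{\nu_l,k})J_{\nu_l}(\nu_l z)$ the second term dominates, cannot be cancelled by the first, and keeps a fixed sign, so $|F_{\nu_l,k}(\nu_l z)|\gtrsim e^{(\beta-\tanh\beta)\nu_l}\nu_l^{-5/6}$ on the window; integrating over its length $\asymp\nu_l^{-2/3}$ gives $\int z^{\alpha}|F_{\nu_l,k}(\nu_l z)|^{p}\,dz\gtrsim e^{p(\beta-\tanh\beta)\nu_l}\nu_l^{-5p/6-2/3}$. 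Since $F_{\nu_l,k}(\nu_l)\sim C e^{(\beta-\tanh\beta)\nu_l}\nu_l^{-5/6}$ by \eqref{temp5}, the exponential factor $e^{(\beta-\tanh\beta)\nu_l}$ cancels throughout, leaving the inner ratio $\lesssim\nu_l^{2/3}q^{p\nu_l}\to 0$ and the outer ratio $\lesssim\nu_l^{2/3-p/6}$, which tends to $0$ precisely when $p>4$ --- the same threshold as in Theorem~\ref{lp}. Adding the two pieces gives the claim.

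I expect the only step going beyond a verbatim adaptation of Theorem~\ref{lp} to be this denominator lower bound: one must confirm that the turning point $z=1$ sits strictly inside the shell's rescaled radial range (which is exactly where Lemma~\ref{limittwosides} and the inequality $w/g_R(w)\in(1,R)$ enter) and that the two Bessel contributions to $F_{\nu_l,k}(\nu_l z)$ do not cancel near it. The latter, however, is already in hand --- it is the mechanism behind \eqref{temp5}--\eqref{temp6} in the proof of Theorem~\ref{linfty2} --- so in the end the proof is largely a repackaging of estimates already established.
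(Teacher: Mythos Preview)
Your proposal is correct and follows essentially the same approach as the paper: the same two-region decomposition, the Paris-type inequality from Lemma~\ref{paris} for the inner numerator, the uniform Debye estimate \eqref{temp6} for the outer numerator, and a turning-point window of width $\asymp\nu_l^{-2/3}$ together with \eqref{debye} and the Airy-type lower bound to control the denominator, yielding the identical rates $\nu_l^{2/3}q^{p\nu_l}$ and $\nu_l^{2/3-p/6}$. The only cosmetic differences are that the paper writes out the Airy expansion for $Y_\nu$ as well (to make the non-cancellation explicit via \eqref{middle2}) and uses windows on both sides of $z=1$ rather than just $[1,1+a\nu_l^{-2/3}]$, and it reuses the symbol $\alpha$ for the Debye angle where you sensibly switched to $\beta$.
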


\begin{proof}
For convenience, set $\alpha = (1-d/2)p+d-1$. For any $r>0$ and $\epsilon>0$, let $D_1(r,\epsilon)$ and $D_2(r,\epsilon)$ be the domains defined in \eqref{shells}. When $l$ is sufficiently large, we have
\begin{equation*}
\begin{split}
\frac{\|u_{klm}\|^p_{L^p(D_1(w/g_R(w),\epsilon))}}{\|u_{klm}\|^p_{L^p(\Omega_{\textrm{shell}})}}
&= \frac{\int_1^{w/g_R(w)-\epsilon}|r^{1-d/2}F_{\nu_l,k}(a_{\nu_l,k}r)|^pr^{d-1}dr}
{\int_1^R|r^{1-d/2}F_{\nu_l,k}(a_{\nu_l,k}r)|^pr^{d-1}dr} \\
&= \frac{\int_{a_{\nu_l,k}/\nu_l}^{(w/g_R(w)-\epsilon)a_{\nu_l,k}/\nu_l}z^\alpha|F_{\nu_l,k}(\nu_lz)|^pdz}
{\int_{a_{\nu_l,k}/\nu_l}^{Ra_{\nu_l,k}/\nu_l}z^\alpha|F_{\nu_l,k}(\nu_lz)|^pdz} \\
&\leq \frac{\int_{a_{\nu_l,k}/\nu_l}^{1-\epsilon/R}z^\alpha|F_{\nu_l,k}(\nu_lz)|^pdz}
{\int_{a_{\nu_l,k}/\nu_l}^1z^\alpha F_{\nu_l,k}(\nu_lz)^pdz}.
\end{split}
\end{equation*}
Using the Paris-type inequality \eqref{parisinequality} and noting that $f(z) = ze^{1-z}$ is an increasing function on $[0,1]$, we have
\begin{equation*}
\begin{split}
\int_{a_{\nu_l,k}/\nu_l}^{1-\epsilon/R}z^\alpha J_{\nu_l}(\nu_lz)^pdz
&\leq F_{\nu_l,k}(\nu_l)^p\int_{a_{\nu_l,k}/\nu_l}^{1-\epsilon/R}z^\alpha[ze^{(1-z)}]^{p\nu_l}dz \\
&\leq F_{\nu_l,k}(\nu_l)^p\int_{a_{\nu_l,k}/\nu_l}^{1-\epsilon/R}
[ze^{(1-z)}]^{p\nu_l+\alpha}e^{-\alpha(1-z)}dz \\
&\lesssim F_{\nu_l,k}(\nu_l)^pq^{p\nu_l+\alpha}.
\end{split}
\end{equation*}
where $q\in(0,1)$ is the constant defined in \eqref{defq2}. Moreover, it follows from \cite[Equation 10.19.8]{special} that as $\nu\rightarrow\infty$ with $a\in\Rnum$ fixed, we have
\begin{gather*}
J_{\nu}(\nu+a\nu^{1/3}) = 2^{1/3}\Ai(-2^{1/3}a)\nu^{-1/3}[1+O(\nu^{-2/3})], \\
Y_{\nu}(\nu+a\nu^{1/3}) = -2^{1/3}\Bi(-2^{1/3}a)\nu^{-1/3}[1+O(\nu^{-2/3})],
\end{gather*}
where $\Ai(z)$ and $\Bi(z)$ are Airy functions. Since $\Ai(z)>0$ and $\Bi(z)>0$ when $|z|$ is sufficiently small, there exists $a>0$ such that
\begin{equation*}
J_{\nu}(z) \gtrsim \nu^{-1/3},\;\;\;Y_{\nu}(z) \gtrsim -\nu^{-1/3},\;\;\;
\textrm{for any\;}|z-\nu|\leq a\nu^{1/3}.
\end{equation*}
Thus, whenever $|z-\nu|\leq a\nu^{1/3}$, it follows from Debye's expansions \eqref{debye} that
\begin{equation}\label{middle2}
F_{\nu_l,k}(z) = J_{\nu_l}(a_{\nu_l,k})Y_{\nu_l}(z)-Y_{\nu_l}(a_{\nu_l,k})J_{\nu_l}(z)
\gtrsim e^{(\alpha-\tanh\alpha)\nu_l}\nu_l^{-5/6},
\end{equation}
where $\alpha$ is the unique solution to the algebraic equation $\sech\alpha = g_R(w)/w$. This implies that
\begin{equation*}
\int_{a_{\nu_l,k}/\nu_l}^1z^\alpha F_{\nu_l,k}(\nu_lz)^pdz
\geq \int_{1-a\nu_l^{-2/3}}^{1}z^\alpha F_{\nu_l,k}(\nu_lz)^pdz
\gtrsim e^{(\alpha-\tanh\alpha)p\nu_l}\nu_l^{-(5p/6+2/3)}.
\end{equation*}
Thus, for any $p\geq 1$, it follows from \eqref{temp5} that
\begin{equation}\label{lppart3}
\frac{\|u_{klm}\|^p_{L^p(D_1(w/g_R(w),\epsilon))}}{\|u_{klm}\|^p_{L^p(\Omega_{\textrm{shell}})}}
\lesssim e^{-(\alpha-\tanh\alpha)p\nu_l}\nu_l^{5p/6+2/3}F_{\nu_l,k}(\nu_l)^pq^{p\nu_l+\alpha}
\lesssim \nu_l^{2/3}q^{p\nu_l+\alpha}\rightarrow 0.
\end{equation}
On the other hand, when $l$ is sufficiently large, we have
\begin{equation*}
\begin{split}
\frac{\|u_{klm}\|^p_{L^p(D_2(w/g_R(w),\epsilon))}}{\|u_{klm}\|^p_{L^p(\Omega_{\textrm{shell}})}}
&= \frac{\int_{w/g_R(w)+\epsilon}^R|r^{1-d/2}F_{\nu_l,k}(a_{\nu_l,k}r)|^pr^{d-1}dr}
{\int_1^R|r^{1-d/2}F_{\nu_l,k}(a_{\nu_l,k}r)|^pr^{d-1}dr} \\
&= \frac{\int_{(w/g_R(w)+\epsilon)a_{\nu_l,k}/\nu_l}^{Ra_{\nu_l,k}/\nu_l}z^\alpha|F_{\nu_l,k}(\nu_lz)|^pdz}
{\int_{a_{\nu_l,k}/\nu_l}^{Ra_{\nu_l,k}/\nu_l}z^\alpha|F_{\nu_l,k}(\nu_lz)|^pdz} \\
&\leq \frac{\int_{1+\epsilon/R}^Rz^\alpha|F_{\nu_l,k}(\nu_lz)|^pdz}
{\int_1^{1+\epsilon}z^\alpha|F_{\nu_l,k}(\nu_lz)|^pdz}
\end{split}
\end{equation*}
When $1+\epsilon/R\leq z\leq R$, it follows from \eqref{temp6} that
\begin{equation*}
|F_{\nu_l,k}(\nu_lz)| \lesssim e^{(\alpha-\tanh\alpha)\nu_l}\nu_l^{-1}.
\end{equation*}
This shows that
\begin{equation*}
\begin{split}
\int_{1+\epsilon/R}^Rz^\alpha|F_{\nu_l,k}(\nu_lz)|^pdz
&\lesssim e^{(\alpha-\tanh\alpha)p\nu_l}\nu_l^{-p}.
\end{split}
\end{equation*}
Moreover, it follows from \eqref{middle2} that
\begin{equation*}
\int_1^{1+\epsilon}z^\alpha|F_{\nu_l,k}(\nu_lz)|^pdz
\geq \int_1^{1+a\nu_l^{-2/3}}z^\alpha|F_{\nu_l,k}(\nu_lz)|^pdz
\gtrsim e^{(\alpha-\tanh\alpha)p\nu_l}\nu_l^{-(5p/6+2/3)}.
\end{equation*}
Therefore, when $p>4$, we have
\begin{equation}\label{lppart4}
\frac{\|u_{klm}\|^p_{L^p(D_2(1/h(w),\epsilon))}}{\|u_{klm}\|^p_{L^p(\Omega_{\textrm{ball}})}}
\lesssim \nu_l^{(5p/6+2/3)-p} \rightarrow 0.
\end{equation}
Combining \eqref{lppart3} and \eqref{lppart4} gives the desired result.
\end{proof}

\begin{remark}[\textbf{dynamical phase transition}]
The above two theorems characterize the two-parameter high-frequency localization for the Schr\"{o}dinger eigenfunctions $u_{klm}$ in a spherical shell as $l,k\rightarrow\infty$ simultaneously while keeping $l/k\rightarrow w$. It turns out that the $l$-$k$ ratio $w$ has a critical value $s(R)$, which separates the parameter region into two phases. Interestingly, we observe a dynamical phase transition when $w$ crosses its critical value.

In the supercritical case of $w>s(R)$, the eigenfunctions are localized around an intermediate sphere with radius $w/g_R(w)\in (1,R)$. From \eqref{initial}, we can see that the localized radius is independent of the strength $c$ of the inverse square potential. Fig. \ref{annulus}(a)-(c) illustrate the graphs and heat maps of the eigenfunctions in the supercritical case of $d = 2$, $c = 1$, $R = 3$, and $w = 10 > s(R) \approx 1.972$ under different choices of $k$ and $l$, where the localized radius is $w/g_R(w)\approx 2$. In analogy to eigenfunctions in balls, the eigenfunctions in a spherical shell decay exponentially inside the localized sphere and decay polynomially outside the localized sphere. In addition, the eigenfunctions are $L^p$-localized for any $p\geq 1$ inside the localized sphere and are $L^p$-localized for any $p>4$ outside the localized sphere.

In the subcritical case of $w<s(R)$, according to our numerical simulations, the eigenfunctions peak around the inner boundary of the spherical shell when $l$ and $k$ are large. However, localization fails to be observed in this case. This can be seen from Fig. \ref{annulus}(d)-(f), which depict the graphs and heat maps of the eigenfunctions in the subcritical case of $d = 2$, $c = 1$, $R = 3$, and $w = 1 < s(R) \approx 1.972$ under different choices of $k$ and $l$. This localization breaking can be explained as follows. If $w<s(R)$, we have $w/g_R(w)<1$  and thus the sphere with radius $w/g_R(w)$ is outside the spherical shell, which hinders the formation of localization.

To further understand the phase transition, we introduce a quantity called the localization index, which is defined as
\begin{equation*}
\gamma_{R,\epsilon}(w) = \left\{\begin{split} \lim_{l,k\rightarrow\infty}\frac{\|u_{klm}\|_{L^\infty(D(w/g_R(w),\epsilon))}}
{\|u_{klm}\|_{L^\infty(\Omega_{\textrm{shell}})}},\;\;\;x>s(R),\\
\lim_{l,k\rightarrow\infty}\frac{\|u_{klm}\|_{L^\infty(D(1,\epsilon))}}
{\|u_{klm}\|_{L^\infty(\Omega_{\textrm{shell}})}},
\;\;\;\;\;\;\;x\leq s(R). \end{split}\right.
\end{equation*}
which is a number between $0$ and $1$. It characterizes the maximum relative height of an eigenfunction outside an $\epsilon$-neighbourhood of the localized sphere. In the subcritical case, the localized sphere is chosen to be the inner boundary of the spherical shell because the eigenfunction peaks around the inner boundary when $l$ and $k$ are large. Obviously, $L^\infty$-localization occurs if and only if the localization index vanishes for any $\epsilon>0$. Fig. \ref{critical}(c) depicts the localization index $\gamma_{R,\epsilon}(w)$ versus the $l$-$k$ ratio $w$ under different choices of $\epsilon$ in the case of $R = 3$, where the critical value is $s(R)\approx 1.972$. Clearly, the system undergoes a second-order phase transition as $w$ cross its critical value $s(R)$. Here $w$ serves as the tuning parameter, $\gamma_{R,\epsilon}(w)$ serves as the order parameter, and $l$ or $k$ serves as the system size. It the subcritical case of $w < s(R)$, the localization index is always positive and thus localization of the eigenfunctions fails to be observed. In the critical case of $w = s(R)$, the localization index is zero, according to our numerical simulations. This suggests that the eigenfunctions are localized around the inner boundary of the spherical shell in the critical case.
\end{remark}

\begin{figure}
\centerline{\includegraphics[width=0.95\textwidth]{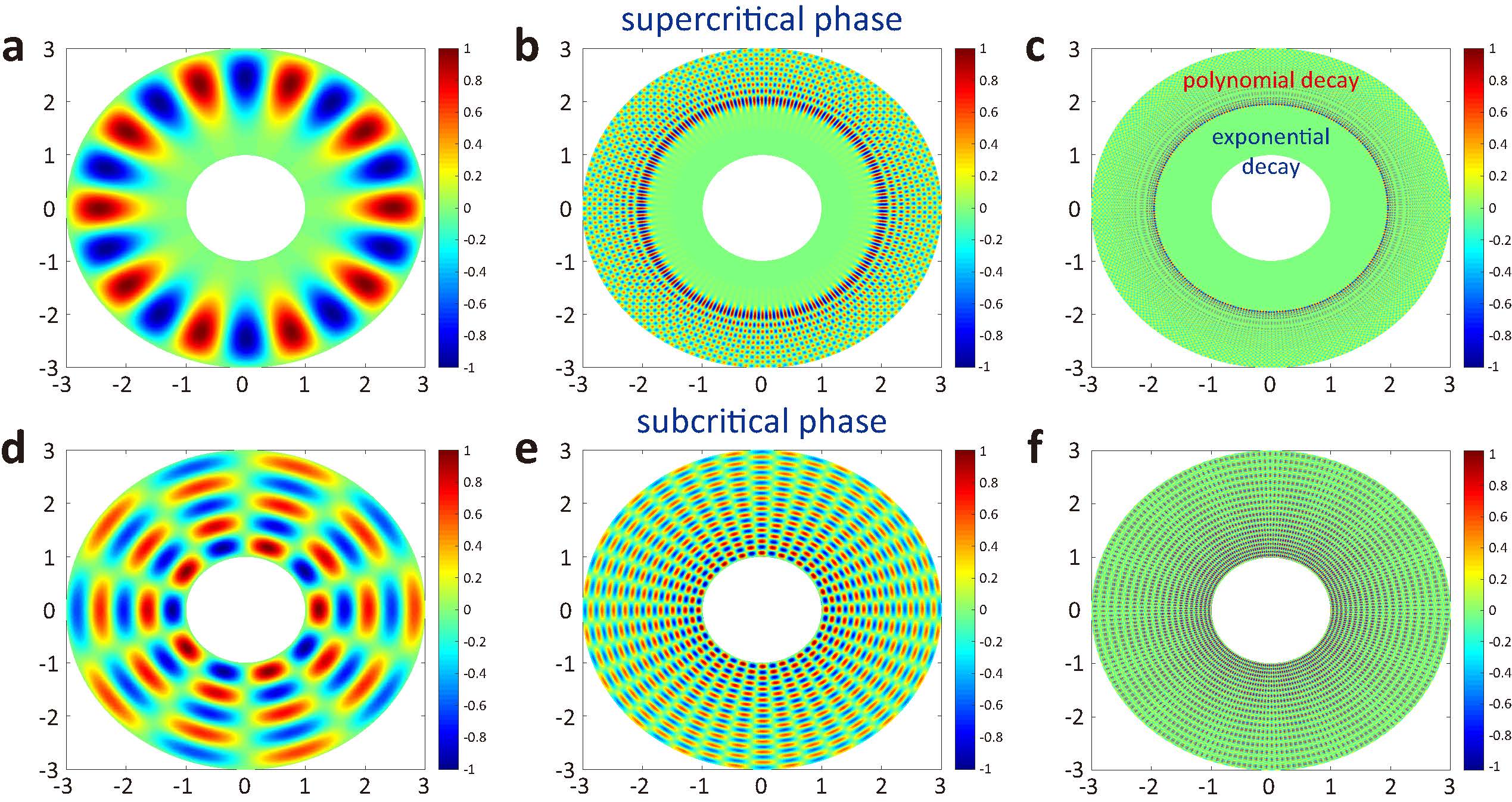}}
\caption{\textbf{Dynamical phase transition for Schr\"{o}dinger eigenfunctions in an annulus.} (a)-(c) Heat maps of the eigenfunctions under different choices of $l$ and $k$ in the supercritical case, where the outer radius is chosen as $R = 3$ and the $l$-$k$ ratio is chosen as $w = 10 > s(R)\approx 1.972$. When $l$ and $k$ are large, the eigenfunctions are localized around the circle with radius $w/g_R(w)\approx 2$. The eigenfunctions decay exponentially inside the localized circle and decay polynomially outside. (a) $k = 1$. (b) $k = 10$. (c) $k = 100$. (d)-(f) Heat maps of the eigenfunctions under different choices of $l$ and $k$ in the subcritical case, where the outer radius is chosen as $R = 3$ and the $l$-$k$ ratio is chosen as $w = 1$. The eigenfunctions fail to be localized in this case. (d) $k = 10$. (e) $k = 20$. (f) $k = 50$. The parameter $c = 1$ in all cases and the eigenfunctions are normalized so that the supreme norm is $1$.}\label{annulus}
\end{figure}

We next consider the one-parameter high-frequency localization for Schr\"{o}dinger eigenfunctions in spherical shells as $l\rightarrow\infty$ and $k$ is fixed. To this end, we need the following lemma.

\begin{lemma}\label{limitoneside}
For each $l\geq 0$ and $k\geq 1$, we have
\begin{equation*}
\frac{a_{\nu_l,k}}{\nu_l} > \frac{1}{R}.
\end{equation*}
Moreover, we have
\begin{equation*}
\lim_{l\rightarrow\infty}\frac{a_{\nu_l,k}}{\nu_l} = \frac{1}{R}.
\end{equation*}
\end{lemma}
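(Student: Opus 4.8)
The statement has two parts, and it is morally the $w\to\infty$ endpoint of Lemma~\ref{limittwosides} (where $w/g_R(w)\to R$); since that lemma needs $k\to\infty$, a direct argument is required. For the \emph{lower bound}, recall that \cite[Equation 1.1]{bobkov2019asymptotic} — the estimate already used in the proof of Lemma~\ref{paris} — guarantees that for any $R'>1$ the first positive zero of the cross product $f_{\nu,R'}$ of \eqref{crossproduct} exceeds $\nu/R'$. Taking $R'=R$ gives $a_{\nu_l,1}>\nu_l/R$, and since the zeros are listed in increasing order, $a_{\nu_l,k}\ge a_{\nu_l,1}>\nu_l/R$ for every $k\ge1$. (One can also see this elementarily: the Wronskian identity $\frac{d}{dx}(Y_\nu/J_\nu)=\frac{2}{\pi x J_\nu(x)^2}>0$ makes $Y_\nu/J_\nu$ strictly increasing between consecutive zeros of $J_\nu$, so $f_{\nu,R}(z)=J_\nu(z)J_\nu(Rz)\big[Y_\nu(Rz)/J_\nu(Rz)-Y_\nu(z)/J_\nu(z)\big]>0$ for $0<z<Rz<j_{\nu,1}$, whence $a_{\nu,1}\ge j_{\nu,1}/R>\nu/R$ because $j_{\nu,1}>\nu$.)

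For the \emph{limit}, fix $k$ and let $l\to\infty$, so $\nu_l\to\infty$. By the lower bound it suffices to prove that for every small $\epsilon>0$ (say $\epsilon<1-1/R$) the function $f_{\nu_l,R}$ has at least $k$ positive zeros below $\nu_l(1/R+\epsilon)$, for then $a_{\nu_l,k}<\nu_l(1/R+\epsilon)$ and letting $\epsilon\downarrow0$ finishes the proof. I would look for these zeros at arguments $\nu_l s$ with $s\in I_\epsilon:=[1/R+\epsilon/2,\,1/R+\epsilon]$, a compact interval on which $s<1$ while $Rs\in(1+R\epsilon/2,\,1+R\epsilon)$ is a compact subinterval of $(1,\infty)$. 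On $I_\epsilon$, Debye's expansion \eqref{debye} governs $J_{\nu_l}(\nu_l s)$ (exponentially small) and $Y_{\nu_l}(\nu_l s)$ (exponentially large), while the oscillatory expansions \eqref{temp3} and \eqref{temp4} govern $J_{\nu_l}(\nu_l Rs)$ and $Y_{\nu_l}(\nu_l Rs)$ uniformly. Substituting into \eqref{crossproduct}, the term $J_{\nu_l}(\nu_l s)Y_{\nu_l}(\nu_l Rs)$ is exponentially smaller than $-Y_{\nu_l}(\nu_l s)J_{\nu_l}(\nu_l Rs)$, and $-Y_{\nu_l}(\nu_l s)>0$, so uniformly for $s\in I_\epsilon$
\begin{equation*}
f_{\nu_l,R}(\nu_l s)=E(s)\,\nu_l^{-1}e^{A(s)\nu_l}\Big[\cos\big(\nu_l\zeta(Rs)-\tfrac{\pi}{4}\big)+o(1)\Big],
\end{equation*}
with $A(s)>0$, $E(s)>0$ continuous, and $\zeta(u)=\sqrt{u^2-1}-\mathrm{arcsec}(u)$ strictly increasing. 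As $s$ runs over $I_\epsilon$ the phase $\nu_l\zeta(Rs)$ increases by $\nu_l\big[\zeta(1+R\epsilon)-\zeta(1+R\epsilon/2)\big]\to\infty$, so the cosine runs through arbitrarily many full oscillations; at its successive extrema $f_{\nu_l,R}(\nu_l\cdot)$ is bounded away from $0$ with alternating sign, and the intermediate value theorem produces a zero between consecutive extrema. Hence for $l$ large $f_{\nu_l,R}$ has at least $k$ zeros with argument in $\nu_l I_\epsilon$, as required.

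The main obstacle is making the displayed asymptotic — in effect, ``the sign of $f_{\nu_l,R}(\nu_l s)$ is the sign of $J_{\nu_l}(\nu_l Rs)$'' — rigorous: one must verify that the remainders in \eqref{debye} and in the oscillatory Debye expansions \eqref{temp3}, \eqref{temp4} are $o(1)$ \emph{uniformly} on $I_\epsilon$ relative to the exponentially large factor $e^{A(s)\nu_l}$. The choice of the window $I_\epsilon$, bounded away from the turning point $Rs=1$, is exactly what keeps the plain Debye expansions valid; near $Rs=1$ they degenerate and one would instead have to use the Airy-type uniform asymptotics, which is why I deliberately avoid that region rather than letting $s$ descend all the way to $1/R$.
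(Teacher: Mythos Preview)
Your argument is correct, but it differs substantially from the paper's. For the lower bound you and the paper agree: both invoke \cite[Equation (1.1)]{bobkov2019asymptotic}. For the limit, however, the paper avoids any asymptotic analysis of $f_{\nu_l,R}$ itself. It simply quotes a companion \emph{upper} bound from the same source, \cite[Equation (4.5)]{bobkov2019asymptotic}, namely
\[
a_{\nu_l,k}<\frac{k}{R}\,f_R\!\left(\frac{\nu_l}{k}\right),
\]
where $f_R$ is the ODE solution of \eqref{initial2}. Dividing by $\nu_l$ and sending $l\to\infty$ with $k$ fixed, the already--established fact $f_R(w)/w\to 1$ from \eqref{limitfR} squeezes $a_{\nu_l,k}/\nu_l$ down to $1/R$. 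That is the whole proof.

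Your route is a direct zero--counting argument via Debye asymptotics on a window $I_\epsilon\subset(1/R,1)$ where $Rs$ stays in a compact subset of $(1,\infty)$, so that the exponential/oscillatory regimes apply uniformly and the sign of $f_{\nu_l,R}(\nu_l s)$ tracks $\cos(\nu_l\zeta(Rs)-\pi/4)$. This is longer and technically more delicate (you correctly flag the uniformity of remainders as the only real issue, and your choice of $I_\epsilon$ bounded away from the turning point $Rs=1$ is exactly what keeps the plain Debye expansions valid), but it is self--contained: it does not rely on the auxiliary comparison function $f_R$ or on the cited upper bound, and it gives an explicit picture of where the first $k$ zeros actually sit. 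The paper's approach buys brevity by outsourcing the hard estimate to \cite{bobkov2019asymptotic}; yours buys transparency at the cost of reproducing, in effect, part of that paper's content.
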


\begin{proof}
It follows from \cite[Equations (1.1) and (4.5)]{bobkov2019asymptotic} that
\begin{equation*}
\frac{\nu_l}{R} < a_{\nu_l,k} < \frac{k}{R}f_R\left(\frac{\nu_l}{k}\right),
\end{equation*}
where $f_R(w)$ is the unique solution of the initial value problem \eqref{initial2}. This inequality, together with \eqref{limitfR}, gives the desired result.
\end{proof}

Imitating the proof of Theorem \ref{linfty2} and applying the above lemma, we obtain the one-parameter high-frequency localization for Schr\"{o}dinger eigenfunctions in spherical shells as $l\rightarrow\infty$.

\begin{corollary}\label{criticalmode}
For each $\epsilon>0$, let $A_R(\epsilon) = \{x\in\Rnum^d:1\leq |x|\leq R-\epsilon\}$. Then for any $\epsilon>0$ and $1\leq p\leq\infty$, we have
\begin{equation*}
\lim_{l\rightarrow\infty}\frac{\|u_{klm}\|_{L^p(A_R(\epsilon))}}
{\|u_{klm}\|_{L^p(\Omega_{\textrm{shell}})}} = 0.
\end{equation*}
\end{corollary}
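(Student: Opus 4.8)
The plan is to mimic the proof of Theorem~\ref{linfty2} (together with that of Theorem~\ref{lp2}), replacing the localized radius $w/g_R(w)$ by the outer radius $R$. Here the localized sphere is the \emph{outer} boundary $|x|=R$, so the region $A_R(\epsilon)=\{1\le|x|\le R-\epsilon\}$ plays exactly the role of $D_1$ in those proofs, while there is no analogue of the set $D_2$ that sits outside the localized sphere. Consequently only one estimate is required, and this is precisely why the conclusion holds for the full range $1\le p\le\infty$ rather than only $p>4$. As before, the angular factor $Y_{lm}$ cancels in the quotient, and after the substitution $z=a_{\nu_l,k}r/\nu_l$ one is reduced, with $\alpha=(1-d/2)p+d-1$, to showing that
\begin{equation*}
\frac{\int_{a_{\nu_l,k}/\nu_l}^{(R-\epsilon)a_{\nu_l,k}/\nu_l}z^{\alpha}|F_{\nu_l,k}(\nu_l z)|^{p}\,dz}{\int_{a_{\nu_l,k}/\nu_l}^{Ra_{\nu_l,k}/\nu_l}z^{\alpha}|F_{\nu_l,k}(\nu_l z)|^{p}\,dz}\longrightarrow 0,
\end{equation*}
and, analogously, the corresponding ratio of $L^\infty$ norms.

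The key input is Lemma~\ref{limitoneside}: as $l\to\infty$ with $k$ fixed, $a_{\nu_l,k}/\nu_l\to 1/R<1$, so $a_{\nu_l,k}<\nu_l$ once $l$ is large. This is exactly the hypothesis genuinely used in the proof of Lemma~\ref{paris} (for the signs of $J_{\nu_l}$ and $Y_{\nu_l}$ at $a_{\nu_l,k}$ and $\nu_l$, the Laforgia--Muldoon bounds, and the location of the first zero of the cross product), so the Paris-type inequality \eqref{parisinequality} remains valid on $[a_{\nu_l,k}/\nu_l,1]$. On $A_R(\epsilon)$ the variable $z$ lies in $[a_{\nu_l,k}/\nu_l,(R-\epsilon)a_{\nu_l,k}/\nu_l]$, and since $(R-\epsilon)a_{\nu_l,k}/\nu_l\to 1-\epsilon/R$, for $l$ large we have $z\le 1-\delta$ with $\delta=\epsilon/(2R)$. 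Using \eqref{parisinequality} and the monotonicity of $z\mapsto ze^{1-z}$ on $[0,1]$, exactly as in Theorem~\ref{lp2}, the numerator is bounded by $\lesssim F_{\nu_l,k}(\nu_l)^{p}\,q^{p\nu_l+\alpha}$ with $q=(1-\delta)e^{\delta}<1$; the $L^\infty$ numerator is $\lesssim F_{\nu_l,k}(\nu_l)\,q^{\nu_l+1-d/2}$.

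For the denominator I would restrict the integral to a shrinking neighbourhood of $z=1$. Since $a_{\nu_l,k}/\nu_l\to 1/R<1$ while $Ra_{\nu_l,k}/\nu_l>1$ by Lemma~\ref{limitoneside}, the interval $[1-a\nu_l^{-2/3},1]$ is contained in the range of integration for $l$ large, and on it $\nu_l z\in[\nu_l-a\nu_l^{1/3},\nu_l]$, so \eqref{middle2} gives $F_{\nu_l,k}(\nu_l z)\gtrsim e^{(\theta-\tanh\theta)\nu_l}\nu_l^{-5/6}$, where $\theta>0$ is now determined by $\sech\theta=1/R$; by \eqref{temp5} this is comparable to $F_{\nu_l,k}(\nu_l)$. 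Hence the denominator is $\gtrsim F_{\nu_l,k}(\nu_l)^{p}\nu_l^{-2/3}$, and the quotient is $\lesssim \nu_l^{2/3}q^{p\nu_l+\alpha}\to 0$ for every $p\ge1$; the $L^\infty$ case is identical (it is literally the $D_1$ estimate of Theorem~\ref{linfty2}).

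The only step needing genuine care — the main obstacle — is noticing that Lemma~\ref{paris}, \eqref{temp5} and \eqref{middle2}, although stated under $l/k\to w>s(R)$, hold verbatim in the one-parameter regime $l\to\infty$, $k$ fixed: all three proofs really use only that $a_{\nu_l,k}/\nu_l$ tends to a limit in $(0,1)$ (which controls $a_{\nu_l,k}<\nu_l$, the relevant signs, and the Debye parameter $\theta$ with $\sech\theta$ equal to that limit), and Lemma~\ref{limitoneside} supplies this with limit $1/R$. Everything else is a routine repetition of the estimates already carried out for Theorems~\ref{linfty2} and~\ref{lp2}.
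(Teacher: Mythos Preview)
Your proposal is correct and follows essentially the same approach as the paper. The paper's proof explicitly treats only the $L^\infty$ case (with the same use of Lemma~\ref{limitoneside} to get $z(r)<1-\epsilon/(2R)$ and the Paris-type inequality from Lemma~\ref{paris} to obtain the exponential decay $q^{\nu_l+1-d/2}$), leaving the $L^p$ case implicit in the phrase ``imitating the proof of Theorem~\ref{linfty2}''; your write-up goes further and spells out the $L^p$ argument via \eqref{middle2} for the denominator, and you correctly identify why the full range $1\le p\le\infty$ is available (no $D_2$ piece) and why Lemma~\ref{paris}, \eqref{temp5}, and \eqref{middle2} carry over to the one-parameter regime.
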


\begin{proof}
We only consider $L^\infty$-localization. Clearly, we have
\begin{equation*}
\frac{\|u_{klm}\|_{L^\infty(A_R(\epsilon))}}{\|u_{klm}\|_{L^\infty(\Omega_{\textrm{shell}})}}
= \frac{\|(a_{\nu_l,k}r)^{1-d/2}F_{\nu_l,k}(a_{\nu_l,k}r)\|_{L^\infty([1,R-\epsilon])}}
{\|(a_{\nu_l,k}r)^{1-d/2}F_{\nu_l,k}(a_{\nu_l,k}r)\|_{L^\infty([1,R])}}.
\end{equation*}
When $l$ is sufficiently large, we have $\nu_l/a_{\nu_l,k} < R$ and thus
\begin{equation*}
\|(a_{\nu_l,k}r)^{1-d/2}F_{\nu_l,k}(a_{\nu_l,k}r)\|_{L^\infty([1,R])}
\geq \nu_l^{1-d/2}F_{\nu_l,k}(\nu_l).
\end{equation*}
Therefore, when $l$ is sufficiently large, it follows from Lemma \ref{limitoneside} that
\begin{equation*}
\frac{\|u_{klm}\|_{L^\infty(A_R(\epsilon))}}{\|u_{klm}\|_{L^\infty(\Omega_{\textrm{shell}})}}
\leq \sup_{1\leq r\leq R-\epsilon}z(r)^{1-d/2}\frac{F_{\nu_l,k}(\nu_lz(r))}{F_{\nu_l,k}(\nu_l)},
\end{equation*}
where
\begin{equation*}
z(r) = \frac{a_{\nu_l,k}r}{\nu_l} < 1-\frac{1}{2R}\epsilon,
\end{equation*}
By Lemma \ref{paris}, since $f(x) = xe^{1-x}$ is an increasing function on $[0,1]$, it is easy to see that
\begin{equation*}
\sup_{z<1-\frac{1}{2R}\epsilon}z^{1-d/2}\frac{F_{\nu_l,k}(\nu_lz)}{F_{\nu_l,k}(\nu_l)}
\leq e^{(1-z)(d/2-1)}[ze^{(1-z)}]^{\nu_l+1-d/2} \lesssim q^{\nu_l+1-d/2},
\end{equation*}
where
\begin{equation*}
q = \left(1-\frac{1}{2R}\epsilon\right)e^{\frac{1}{2R}\epsilon} < 1.
\end{equation*}
This implies the desired result.
\end{proof}

\begin{remark}[\textbf{whispering gallery modes and critical modes}]
The above corollary indicates that the eigenfunctions are localized around the outer boundary of the spherical shell as $l\rightarrow\infty$ and $k$ is fixed, leading to whispering gallery modes, as depicted in Fig. \ref{modes}(a). Moreover, in the critical case of $w = s(R)$, the eigenfunctions are localized around the inner boundary of the spherical shell. These eigenfunctions will be referred to as critical modes, as depicted in Fig. \ref{modes}(b).
\end{remark}

\begin{remark}[\textbf{Breaking of focusing modes}]
Interestingly, the eigenfunctions are not localized in a spherical shell as $k\rightarrow\infty$ and $l$ is fixed. Fig. \ref{modes}(c),(d) illustrate two different shapes of eigenfunctions under this limit, where no localization could be observed. This breaking of focusing modes, together with the occurrence of whispering gallery modes, can be viewed as limiting cases of our two-parameter localization. As $l\rightarrow\infty$ and $k$ is fixed, we have $l/k\rightarrow\infty$. In the limiting case of $w\rightarrow\infty>s(R)$, the localized radius $w/g_R(w)\rightarrow R$ and thus the eigenfunctions are localized around the outer boundary of the spherical shell, giving rise to whispering gallery modes. On the other hand, as $k\rightarrow\infty$ and $l$ is fixed, we have $l/k\rightarrow 0$. The limiting case of $w\rightarrow 0<s(R)$ belongs to the subcritical case and thus fails to produce localization.
\end{remark}

\begin{figure}
\centerline{\includegraphics[width=0.7\textwidth]{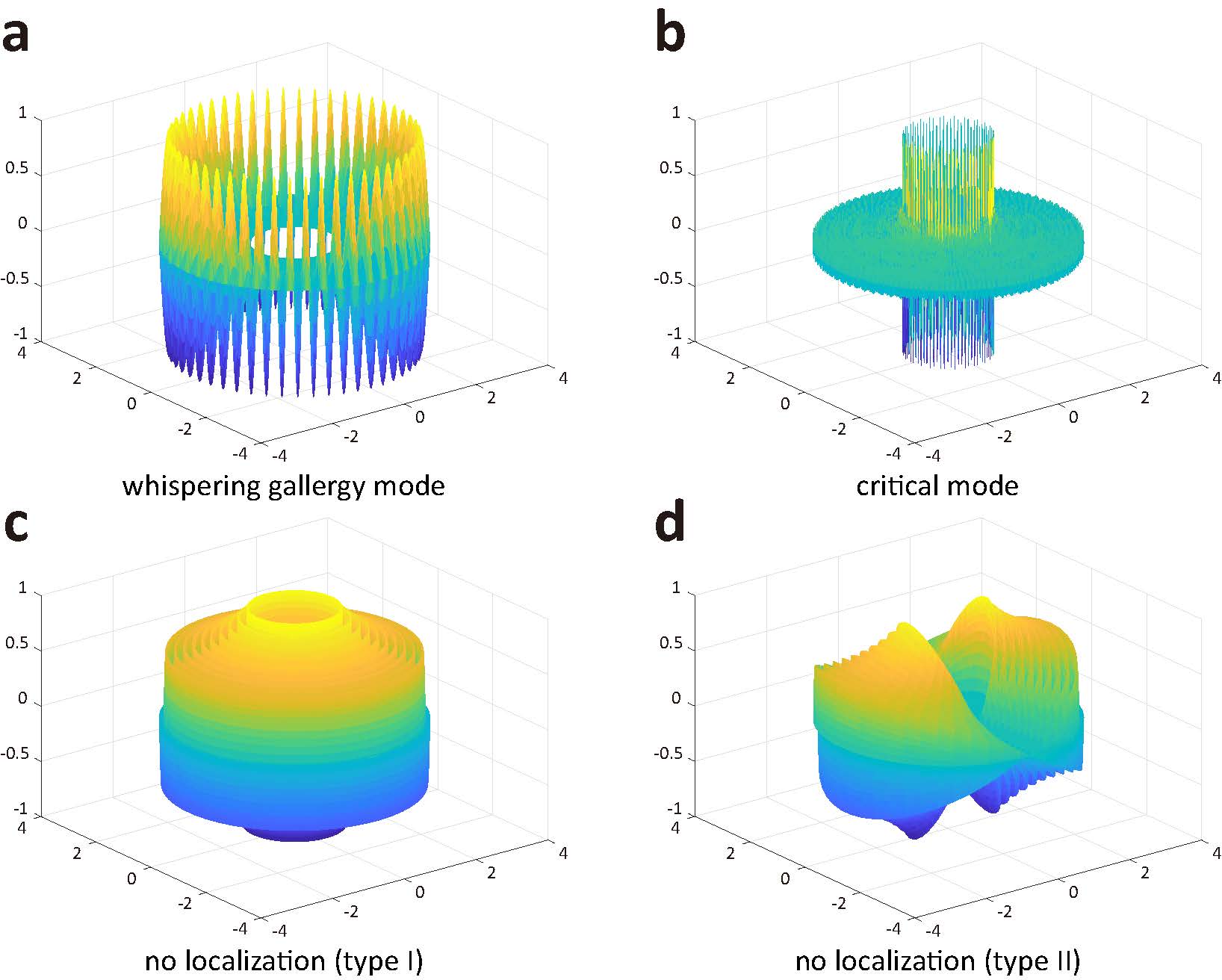}}
\caption{\textbf{Whispering gallery modes, critical modes, and the breaking of focusing modes for the Schr\"{o}dinger operator in spherical shells.} (a) Whispering gallery modes, where we take $l = 50$ and $k = 1$. (b) Critical modes, where we take $l = 197500$ and $k = 10000$. (c),(d) Breaking of focusing modes. (c) $l = 0$ and $k = 20$. (d) $l = 1$ and $k = 20$. The parameter $c = 1$ in all cases and the eigenfunctions are normalized so that the supreme norm is $1$.}\label{modes}
\end{figure}

\section{Localization for Schr\"{o}dinger eigenfunctions in sectors}
In the section, we consider the eigenvalue problem \eqref{eigenproblem} in high-dimensional sectors and annulus sectors. We first consider the two-dimensional case when the domain $\Omega$ is a sector with polar coordinate representation
\begin{equation*}
\Omega = \{(r,\theta):r<1,\;0<\theta<\beta\pi\},
\end{equation*}
where $0<\beta<2$. Similarly, all basis eigenfunctions of the eigenvalue problem \eqref{eigenproblem} are given by
\begin{equation*}
u_{kl}(r,\xi)
= J_{\nu_l}(j_{\nu_l,k}r)\sin\left(\frac{l\theta}{\beta}\right),
\end{equation*}
where $\nu_l=\sqrt{l^2+c^2}$ and $j_{\nu_l,k}$ is the $k$th zero of the Bessel function $J_{\nu_l}$. The localization behavior of the eigenfunctions in sectors is almost the same as that in balls. As $k,l\rightarrow\infty$ while keeping $l/k\rightarrow w>0$, the eigenfunctions are localized around the circular arc with polar coordinate representation
\begin{equation*}
\{(r,\theta):r = 1/h(w),\;0<\theta<\beta\pi\},
\end{equation*}
where $h(w)>1$ is defined in \eqref{inverse}. Moreover, whispering gallery modes appear as $l\rightarrow\infty$ and $k$ is fixed and focusing modes appear as $k\rightarrow\infty$ and $l$ is fixed.

We next consider the two-dimensional case when the domain $\Omega$ is an annulus sector with polar coordinate representation
\begin{equation*}
\Omega = \{(r,\theta):1<r<R,\;0<\theta<\beta\pi\},
\end{equation*}
where $R>1$ and $0<\beta<2$. Similarly, all basis eigenfunctions of the eigenvalue problem \eqref{eigenproblem} are given by
\begin{equation*}
u_{kl}(r,\xi)
= [J_{\nu_l}(a_{\nu_l,k})Y_{\nu_l}(a_{\nu_l,k}r)-Y_{\nu_l}(a_{\nu_l,k})J_{\nu_l}(a_{\nu_l,k}r)]
\sin\left(\frac{l\theta}{\beta}\right),
\end{equation*}
where $a_{\nu_l,k}$ is the $k$th zero of the cross product \eqref{crossproduct}. The localization behavior of the eigenfunctions in annulus sectors is almost the same as that in spherical shells. As $k,l\rightarrow\infty$ while keeping $l/k\rightarrow w>0$, the eigenfunctions undergo a phase transition as the $l$-$k$ ratio $w$ crosses the critical value $s(R)$. In the supercritical case of $w>s(R)$, the eigenfunctions are localized around the circular arc with polar coordinate representation
\begin{equation*}
\{(r,\theta):r = w/g_R(w),\;0<\theta<\beta\pi\},
\end{equation*}
where $g_R(w)$ is defined in \eqref{initial}. In the critical case of $w=s(R)$, the eigenfunctions are localized around the circular arc with polar coordinate representation
\begin{equation*}
\{(r,\theta):r = 1,\;0<\theta<\beta\pi\}.
\end{equation*}
In the subcritical case of $w<s(R)$, no localization could be observed. The whispering gallery modes appear as $l\rightarrow\infty$ and $k$ is fixed, while the focusing modes will not appear as $k\rightarrow\infty$ and $l$ is fixed.

In the high-dimensional case of $d\geq 3$, a sector corresponds to a subdomain of the unit sphere centered at the origin and forms a spatial angle. The spatial angle corresponds to a subdomain of the unit sphere in the same way that a planar angle corresponds to an arc of the unit circle. The conclusions in the high-dimensional case are almost the same as those in the two-dimensions case.

\section{Discussion}
In this paper, we present an detailed analysis of the two-parameter high-frequency localization for the eigenfunctions of a Schr\"{o}dinger operator with an inverse square potential in spherically symmetric domains under the Dirichlet boundary condition. The localization is realized as the azimuthal quantum number $l$ and principal quantum number $k$ tend to infinity simultaneously while keeping their ratio $l/k$ asymptotically as a constant. Four types of domains are considered: balls, sectors, spherical shells, and annulus sectors. We prove that the Schr\"{o}dinger eigenfunctions in the former two types of domains are localized around a sphere or a circular arc whose radius is increasing with respect to the $l$-$k$ ratio. In addition, we show that the decaying speeds of the eigenfunctions inside and outside the localized sphere are different: the eigenfunctions decay exponentially inside the localized sphere and decay polynomially outside. We also show that the eigenfunctions are $L^p$-localized for any $p\geq 1$ inside the localized sphere and are $L^p$-localized for any $p>4$ outside.

More interestingly, we find that the Schr\"{o}dinger eigenfunctions in the latter two types of domains experience a phase transition when the $l$-$k$ ratio crosses a critical value. In the supercritical and critical cases, the eigenfunctions are localized around a sphere or a circular arc whose radius is increasing with respect to the $l$-$k$ ratio. In the subcritical case, no localization of the eigenfunctions could be observed, leading to a new phenomenon of localization breaking.

Our results generalize the previous results about the one-parameter high-frequency localization for the eigenfunctions of the Laplacian from two different aspects. First, the Laplacian is a special case of our Schr\"{o}dinger operator when the constant $c$ vanishes. Second, the one-parameter localization can be viewed as limiting cases of our two-parameter localization when the $l$-$k$ ratio is infinity or zero, giving rise to whispering gallery modes and focusing modes in balls and only whispering gallery modes in spherical shells.

Technically, we use the $L^p$ norms to characterize two-parameter localization in the present paper. It would be interesting to study whether the Schr\"{o}dinger eigenfunctions are localized under other norms such as the $C^k$ norms, H\"{o}lder norms, and Sobolev norms. Two future challenges are expected to be solved. The first is to investigate the two-parameter localization for Schr\"{o}dinger eigenfunctions in other domains such as ellipsoids and polyhedrons. The second is to generalize the inverse square potential to other deterministic or stochastic potential functions.

\setlength{\bibsep}{5pt}
\small\bibliographystyle{nature}

\end{document}